\newtheorem{theorem}{Theorem}
\newtheorem{prop}{Proposition}
\newtheorem{conj}{Conjecture}
\newtheorem{definition}{Definition}
\newtheorem{defn}[definition]{Definition}
\newtheorem{claim}[prop]{Claim}
\newtheorem{corollary}[theorem]{Corollary}
\definecolor{mygreen}{RGB}{193,225,193}
\definecolor{myred}{RGB}{250, 160, 160}
\title{Rent Division with Picky Roommates}
\author{Yanqing Huang\\
	Department of Computer Science\\
        Harvard University \\
	Cambridge, MA 02138 \\
	\And
	Madeline Kitch \\
	Department of Applied Mathematics\\
        Harvard University \\
	Cambridge, MA 02138 \\
	\And
	Natalie Melas-Kyriazi \\
	Department of Computer Science\\
        Harvard University \\
	Cambridge, MA 02138
}
\begin{document}
\maketitle

\begin{abstract}
How can one assign roommates and rooms when tenants have preferences for both where and with whom they live? In this setting, the usual notions of envy-freeness and maximizing social welfare may not hold; the roommate rent-division problem is assumed to be NP-hard, and even when welfare is maximized, an envy-free price vector may not exist. We first construct a novel greedy algorithm with bipartite matching before exploiting the connection between social welfare maximization and the maximum weighted independent set (MWIS) problem to construct a polynomial-time algorithm that gives a $\frac{3}{4}+\varepsilon$-approximation of maximum social welfare. Further, we present an integer program to find a room envy-free price vector that minimizes envy between any two tenants. We show empirically that a MWIS algorithm returns the optimal allocation in \textit{polynomial time} and conjecture that this problem, at the forefront of computer science research, may have an exact polynomial algorithm solution. This study not only advances the theoretical framework for roommate rent division but also offers practical algorithmic solutions that can be implemented in real-world applications.
\end{abstract}

\keywords{Rent Division \and Algorithmic Game Theory \and Envy-Freeness}

\section{Introduction}

Rent division is the study of how to assign rooms to tenants and fairly divide the rent among them. In the standard setting, the number of rooms is equal to the number of tenants. In order to ensure a division is ``fair," many rent division models satisfy a constraint called ``envy-freeness," which is the idea that every given participant prefers his or her assignment over that of everyone else \citep{velez_equitable_2018}. People have studied variants of this problem, including which notion of fairness to consider \citep{fairest}, what happens when budget constraints limit pricing \citep{procaccia_fair_2018}, complications from uncertainty in preferences \citep{peters2022robust}, and when people may be deciding among many apartments \citep{procaccia2024multiapartment}. 

In many practical cases, however, there may be more tenants than rooms, requiring some people to share with one another. We insist this is a consideration that requires new analysis. For instance, Anne might prefer to live in a room with Daniel than with Joe, and this preference impacts her utility, and consequently the total social welfare. It is not obvious that previous research can be applied to the rent division problem when it is extended to include roommates. 

While the classical setting relies on results relating to the allocation of goods among individuals \citep{alkan_fair_1991}, these tools no longer apply directly to the setting with roommates. One's consumption of a good (their place of living) is inherently tied to the other who consumes it. Moreover, the notion of envy-freeness is both ill-defined (how can one be envious of their roommate?), and, as we prove, a (roommate-excluded) envy-freeness solution does not always exist for maximum welfare assignments. Two papers on rent division address some components of this problem: ``Rental Harmony with Roommates" \citep{azrieli_rental_2014} and ``Assignment and Pricing in Roommate Market'' \citep{chan_assignment_2016}. The former introduces the rent division model with roommates and proves the existence of an envy-free solution when peoples' preferences do not depend on their roommate. The latter introduces the ``roommate market model," a variation on rental division where every room in a given house has two people, and people have separable utilities that depend on the price they pay, the room they receive, and roommate with whom they live.In this model, each tenant assigns a value to each room and to other tenants as roommates. Chan et al. show that maximizing the social welfare is NP-hard and thus propose a polynomial-time solution which achieves at least 2/3 of the maximum social welfare.


\begin{figure}[ht!]
    \centering
    \includegraphics[width=0.70\textwidth]{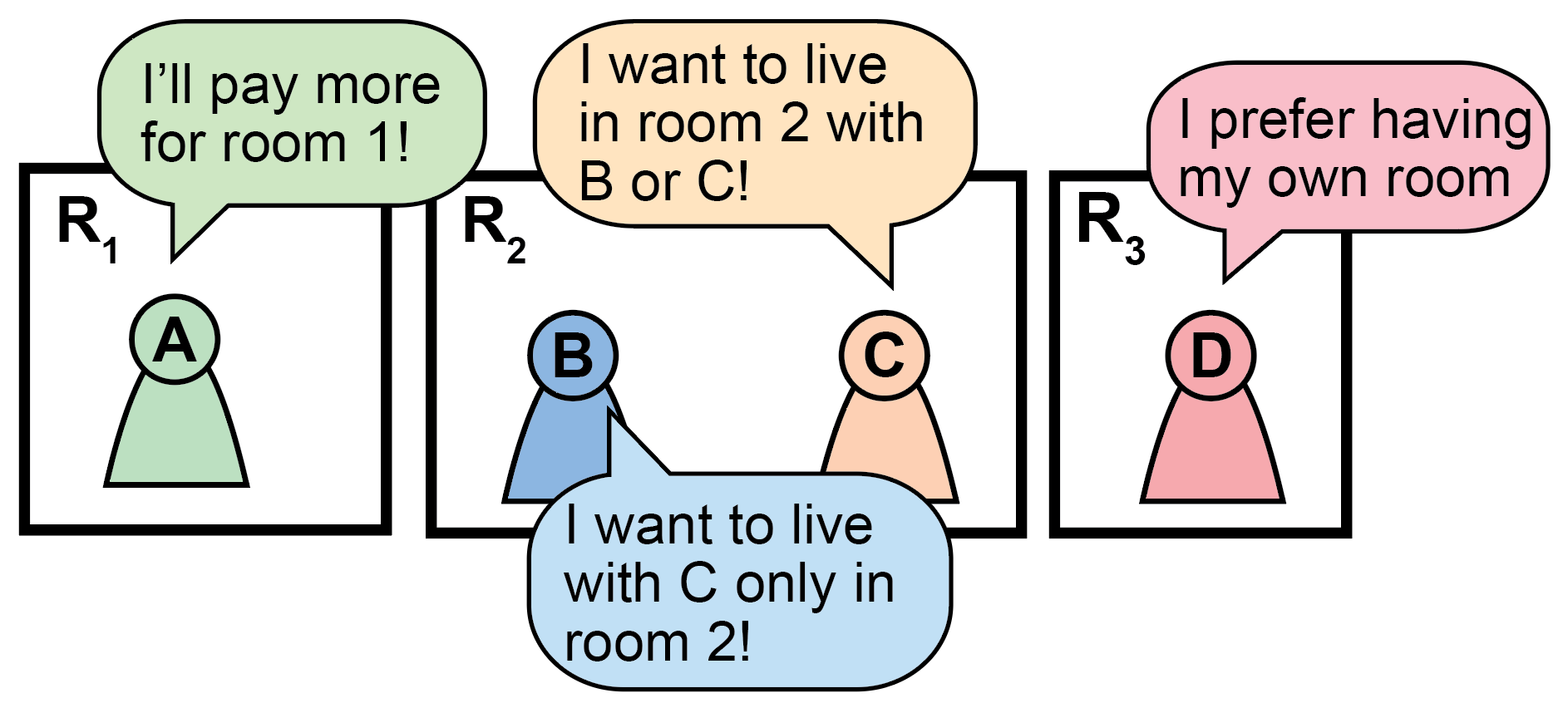}
    \caption{Illustrated Example of the Roommates Problem.}
\end{figure}

In this paper, we explore how to account for room and roommate preferences when room-sharing is only necessary for a subset of rooms in a house with non-separable utilities with regard to rooms and roommates. We present novel algorithmic approaches to (1) find the maximum social welfare assignment and (2) produce prices that minimizes tenant envy to ensure fairness.

\section{Model}
\subsection{Notation}

We approach this problem in three steps: first, we determine whether people will live alone, or if not, with whom they will share a room. We call these the rooming groups. The number of rooming groups is equal to the number of rooms. Second, we choose in what room the respective rooming groups will live. Lastly, we determine a set of room prices. Therefore, given an \textit{instance}, our goal is to output an \textit{assignment} consisting of rooming groups and their assigned rooms, and \textit{prices}.

\textit{Instance}. Let there be $m$ people and $n$ rooms with $m \in [n, 2n]$. People are indexed by $i \in [m]$ and rooms by $r \in [n]$. The \textit{instance} is defined by the tensor $V  \in \mathbb{R}_{+}^{m \times m \times n}$ and the total rent $R$. Component $v_{ijr}$ of $V$ is the valuation of person $i$ living with roommate $j$ in room $r$. Likewise, $v_{iir}$ is the utility for person $i$ living alone in room $r$. We assume utilities are quasi-linear so that a person $i$ assigned to live with roommate $j$ in room $r$ paying price $\tilde{p}_{r}$ has utility $u_i = v_{ijr}-\tilde{p}_{i}.$

\textit{Assignment}. In the assignment, we give each tenant a rooming group, and each rooming group a room. Rooming groups can be represented as a partition of the set of people, $[m]$, into $n$ disjoint subsets of size at most $2$. Formally, we are looking for a partition $M := \{S_g\}_{g \in [n]}$ such that for all $g$, $|S_g| \le 2$, and $\sqcup_{g=1}^{n}S_g = [m]$. Further, we define $\sigma: [n] \to [n]$ to be the permutation that maps rooming groups to rooms. 
Thus, the assignment $A$ of people to roommates and rooms is given by the tuple $A = (M, \sigma)$. 

\textit{Prices}. Lastly, given $A$, we require a price vector $p \in \mathbb{R}^n$ where each component $p_{r}$ is the price of room $r$ such that the values sum to total rent $R$. Thus, any solution algorithm can be viewed as a map from the \textit{instance} $V$ to the \textit{solution} $(M,\sigma,p)$. Finally, we may divide rent among tenants, as represented by the vector $\tilde{p} \in \mathbb{R}^m$. 
\subsection{Key Definitions}
We seek a solution that results in the maximal happiness for the tenants, and guarantees some fairness or stability condition. This motivates the central definitions of our paper: the social welfare of an allocation and room envy-freeness. 

Define $V(S_g,r)$ to be the valuation of rooming group $g$ living in room $r$. This is given by
\[V(S_g,r) = \begin{cases}v_{ijr}+ v_{jir} & S_g=\{i,j\} \\ v_{iir} & S_g=\{i\}.\end{cases}\]

\begin{defn}
    We say that the social welfare of an assignment $A=(M,\sigma)$ for $M=\{S_g\}_{g\in [n]}$ is
    \[\text{SWF}(A) = \sum_{i \in [n]}V(S_g,\sigma(g))-R\]
\end{defn}

We say that $A^* = (M^*, \sigma^*)$ is a maximum social welfare allocation if $A^* \in \mathrm{argmax}_{M, \sigma}\text{SWF}(A=(M,\sigma))$.

In order to assign rooming groups, we have to extend the definition of utility. For rooming group $S_g$ their utility is $U(S_g,r,p) =V(S_g,r)-p_r.$ The definition of room envy-freeness is that no rooming group would get higher total utility from swapping rooms with another rooming group and paying its respective price. This motivates our formal definition:

\begin{defn}
    We say that a \textit{solution} $(M,\sigma,p)$ for instance $V$ is \textit{room envy-free} (REF) if $\forall \ g,g' \in [n]$:
    \[U(S_g,\sigma(g),p_{\sigma(g)}) \ge U(S_g,\sigma(g'),p_{\sigma(g')}).\]
\end{defn}

With regards to each tenant's envy, we define $\varepsilon$-EF as the following:
\begin{defn}\label{defn: e-EF}
     We say that a \textit{solution} $(M,\sigma, \tilde{p})$ for instance $V$ is $\varepsilon$-EF if $\forall i,j \in [n]$, $\{i, i'\} \in S_g$, $\{j, j'\} \in S_{g'}$:
    \[v_{ii'\sigma(g)} - \tilde{p}_i \ge \dfrac{1}{\varepsilon}(v_{ij'\sigma(g')} - \tilde{p}_j)\]
\end{defn}

In the case where $\varepsilon = 1$, tenant $i$ would rather live in their assigned room $\sigma(g)$ than switch with any other person $j$ and live with their roommate $j'$ while paying their price $\tilde{p}_j$. We call this person envy-freeness (PEF). We cannot provably guarantee PEF for welfare-maximizing allocations (Theorem \ref{thm:PEF}), and thus the relaxation to $\varepsilon > 1$ is needed for defining different degrees of envy in the rooming assignment. The higher $\varepsilon$ is, the higher the envy of the most envious tenant. Thus, lower values of $\varepsilon$ are generally associated with more ``fair'' solutions. 

\section{Assignment Algorithms}
The primary contribution of our theoretical work is to design algorithms that return allocations very close to or exactly equal to the welfare-maximizing allocations. 

As a brief aside, and to illustrate the problem's complexity, one might be interested to see if an exhaustive search is possible for small $(m,n)$. Why not just check all of the options? It turns out that brute forcing a maximum social welfare assignment is computationally intractable with complexity $O(2^nn!)$! Since the number of assignment options is the number of rooming groups times the number of possible matchings of groups to rooms, combinatorial formulas give us $\frac{\binom{m}{2n-m}\cdot  (2m-2n)!\cdot n!}{(m-n)!2^{m-n}}$ options. When $(m,n) = (4,3)$, there are $36$ assignment options. However, when $(m,n) = (12,9)$, there are more than $5$ billion options! Thus, even at relatively small scales an exhaustive search is impractical. 
\subsection{Greedy Algorithm}
As a first step, we propose and analyze a greedy algorithm to generate an assignment. The algorithm first generates all possible tuples $T = \{(i , j, r)\}$ where $i, j \in [m]$ and $r \in [n]$. Intuitively, this is all of the possible rooming group pairs in all possible rooms. Next, the algorithm calculates the utility of each tuple as the sum of the respective roommates' utilities for living with their roommate in room $r$. If this assignment is valid (if there are enough rooms left to accommodate all the unallocated people), the algorithm picks the tuple with the highest utility and adds it to the assignment $A$, and then removes every tuple which contains $i, j$, or $r$. If the assignment is not valid, it removes all singles from $T$ and repeats. This loop continues until all people and rooms are assigned. For an example, see section \ref{appendix: greedy ex}.

\begin{algorithm}[ht!]
\caption{Greedy Algorithm}\label{alg:greedy}
\begin{algorithmic}[1]
\Require $m, n$
\State For every combination of roommates and rooms with $i \leq j$, create a set $T$ of tuples $(i, j, r)$ with valuation $v_{(i, j, r)} = v_{ijr} + v_{jir}$ if $i \neq j$ and $v_{ijr}$ if $i = j$.
\State Start with an empty assignment $A$.
\While{T $\neq \varnothing$}
\State Let the number of rooms yet to be allocated be $n'$ and number of people yet to be allocated be $m'$. 
\If {$m' = 2n'$}
    \State Remove all tuples in $T$ where $i = j$ (singles)
\ElsIf{$m' = n'$}
    \State Remove all tuples in $T$ where $i \neq j$ (doubles)
\EndIf
\State Pick the tuple $(i', j', r')$ maximizing $v_{(i', j', r')}.$
\If {$i' = j'$ and $m' - 1 > 2(n' - 1)$}
    \State Remove all tuples in $T$ where $i = j$ (singles)
    \State Skip to next iteration
\EndIf
\State Add tuple $(i', j', r')$ to assignment $A.$
\State Remove all tuples from $T$ that include $i$, $j$, or $r$, and increment $m'$, $n'.$
\EndWhile \\
\Return $A$
\end{algorithmic}
\end{algorithm}

\begin{claim}
This algorithm returns a valid assignment to the roommates problem.
\end{claim}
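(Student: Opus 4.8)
The plan is to reduce ``validity'' to a single feasibility condition and then verify it through a loop invariant. First I would fix what ``valid'' means: an assignment $A=(M,\sigma)$ is valid exactly when $M$ partitions $[m]$ into $n$ groups of size at most $2$ and $\sigma$ is a bijection from groups to rooms. A short count shows any such $M$ has exactly $m-n$ doubles and $2n-m$ singles. Two of the three defining properties are automatic from the bookkeeping: each time a tuple $(i',j',r')$ is committed to $A$, every surviving tuple mentioning $i'$, $j'$, or $r'$ is deleted, so committed groups stay vertex-disjoint and no room is reused; and tuples are either pairs or singletons, so groups never exceed size $2$. Hence the real content of the claim is that the loop terminates with \emph{all} $m$ people and \emph{all} $n$ rooms committed, without ever getting stuck.

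The key step is the loop invariant. At the top of each iteration, let $m'$ and $n'$ be the numbers of unassigned people and rooms. I claim the committed tuples always form a valid partial assignment and the remaining subproblem is \emph{feasible}, meaning $n' \le m' \le 2n'$; this is precisely the condition under which $m'$ people can be packed into $n'$ rooms using group sizes in $\{1,2\}$. At entry the committed set is empty, $m'=m$, $n'=n$, and $n \le m \le 2n$ holds by the standing assumption $m\in[n,2n]$, so the invariant is true initially.

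Maintaining feasibility across a commit is the crux, and I expect it to be the main obstacle. The three branches of the algorithm exist exactly for this. When $m'=2n'$ every room must be a double, so removing all singles forces a double and preserves $m'-2 = 2(n'-1)$; when $m'=n'$ every room must be a single, so removing all doubles preserves $m'-1 = n'-1$; and in the interior range $n' < m' < 2n'$ both types are admissible, with committing a double yielding $n'-1 \le m'-2 \le 2(n'-1)$ and committing a single yielding $n'-1 \le m'-1 \le 2(n'-1)$. I would verify these inequalities branch by branch to re-establish $n' \le m' \le 2n'$ after every commit. The guarded step then serves as a consistent safeguard on the upper bound: it refuses a single whenever $m'-1 > 2(n'-1)$ and instead prunes singles to force a double, and one checks that, given the boundary pruning, this guard only ever agrees with the already-maintained invariant rather than violating it.

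Finally I would close the argument with termination and completeness together. Every iteration strictly shrinks $T$: a commit deletes at least the chosen tuple, and each pruning branch that fires deletes at least the tuple that triggered it, so the loop runs for at most $|T|$ steps. The invariant shows pruning never discards a tuple that a valid completion would need, since singles are removed only when all remaining groups must be doubles (and symmetrically for doubles); consequently, whenever $m'>0$, feasibility guarantees a compatible admissible tuple still survives in $T$. Therefore $T=\varnothing$ can occur only at $m'=0$, which forces $n'=0$ because $n' \le m'$. At that point every person and every room is used, and by the first paragraph the output $(M,\sigma)$ is a valid assignment. The delicate point to nail down is precisely that the two boundary prunings and the interior guard jointly cover every way a greedy single-pick could strand people into too few rooms, so that is where I would concentrate the rigor.
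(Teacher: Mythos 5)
Your proof is correct, and it is in fact more complete than the paper's own argument, though the overall skeleton is the same (each round extends a valid partial assignment via the disjointness bookkeeping, plus a termination argument). The paper's proof is terse: it asserts that the algorithm ``ensures $m' \le 2n'$'' before the last round and that $n'=1$ then forces $m' \in \{1,2\}$, without ever stating or checking an invariant. Your explicit feasibility invariant $n' \le m' \le 2n'$, verified branch by branch, is the key lemma the paper omits: in particular you check the lower bound $n' \le m'$ (that committing a double never strands a room with no one left to fill it, which is exactly what the pruning at $m'=n'$ prevents), a direction the paper's proof never addresses. Your observation that the boundary regimes $m'=2n'$ and $m'=n'$ are absorbing is also what legitimizes the \emph{permanent} deletion of singles (resp.\ doubles) from $T$ --- once $m'=2n'$, only doubles are ever committed, so $m'=2n'$ persists and no pruned tuple could appear in any valid completion; the paper leaves this implicit. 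You are likewise right that the mid-loop guard ($i'=j'$ with $m'-1>2(n'-1)$) can fire only when $m'=2n'$, hence is redundant given the top-of-loop pruning; flagging that it ``only ever agrees with the invariant'' is the correct disposition. The one step to spell out when writing this up is the completeness claim that a compatible tuple survives in $T$ whenever $m'>0$: a tuple on wholly unassigned people and an unassigned room is removed only by pruning, and pruning removes only the type that has become, and by absorption remains, inadmissible --- so $T=\varnothing$ indeed forces $m'=0$ and then $n'=0$ via the lower bound. With that made explicit, your argument closes a genuine gap in the published proof rather than merely rephrasing it.
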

\begin{proof}
    We prove Claim 1 in a two step process: first, we show that every round of this algorithm returns a valid partial assignment, and then we show that this algorithm always terminates. Assume that we have a valid partial assignment $A_{\phi}$. For simplicity, we will use the $\{(i, j, r)\}$ notation for an assignment, but this can be easily translated to $A_\phi = (M_\phi, \sigma_\phi)$. Before a new tuple is added to the assignment, for all $(i, j, r) \in A_{\phi}$, $T$ does not contain any tuples with either $i$, $j$, or $r$, so the tuple $(i', j', r')$ added to the partial assignment from a round of the algorithm is pairwise disjoint from all the tuples in $A_\phi$, hence $A_{\phi'} = \{A_\phi, (i', j', r')\}$ is a valid partial assignment. Additionally, the algorithm terminates with $T$ being empty because before the last greedy round, the algorithm ensures that after $(i', j', r')$ is added to the allocation, $m' \leq 2 * n'$. In the last round, $n' = 1$ since each round removes a possible room, so $m' = 1$ or $m' = 2$, which means that we only have $1$ valid choice left in $T$, the picking of which terminates the algorithm. 
\end{proof}

\subsection{Greedy with Bipartite Matching}

Though the greedy algorithm returns a valid assignment to the roommates problem, it does not necessarily produce the assignment with the maximum social welfare with respect to that assignment. Thus, we introduce a post-processing step of bipartite matching to weakly increase the total social welfare produced from our greedy algorithm.



\begin{algorithm}
\caption{Bipartite Matching Algorithm}\label{alg:bma}
\begin{algorithmic}[1]
\Require $m, n, A.$
\State For every rooming group $S_g \in M$ and every room $r \in [n]$, create a complete bipartite graph between $\{S_g\}$ and $\{r\}$ with edge weights $V(S_g, r)$.
\State Find the maximum weight bipartite matching $\sigma^\star.$ \\
\Return $A^\star = (M, \sigma^\star).$
\end{algorithmic}
\end{algorithm}

This algorithm runs in polynomial time when implemented with the Hungarian algorithm \citep{hungarian}. Since $|M| = n$ and there are $n$ rooms, we know that every vertex will have degree $n$, thus the bipartite graph defined above is $n$-regular. It is well-known that using Hall’s Theorem, one can show that every regular bipartite graph has a perfect matching \citep{matchings}. The matching can be computed in time $O(n\log{n})$ \citep{goel2010perfect}. Note that the algorithm produces the maximum social welfare with respect to the valid matching $M$. As we show later, this implies that a REF price vector exists (\ref{thm:REF_matching}).

\subsection{Maximum Weight Independent Set (MWIS)}
The Maximum Weight Independent Set problem is defined on a graph $G = (X, E, w)$ and weight function $w: X \to \mathbb{R}_{+}$. The goal is to find the set $I \subseteq X$ such that $I$ has no vertices with edges between them and the total vertex weight $\sum_{x \in I}w(x)$ is maximized. We will show that there is a reduction of the MWIS problem to the max-welfare rental assignment problem. 

\begin{theorem}
For $m = 2n$, the maximum social welfare assignment is equivalent to a solution to the maximum weighted independent set problem on a graph $G$.
\end{theorem}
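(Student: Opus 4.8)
The plan is to exhibit a graph $G=(X,E,w)$ whose independent sets are exactly the valid (partial) assignments, so that a heaviest independent set recovers the welfare-maximizing assignment. Since $m=2n$, every rooming group has size exactly $2$, and a valid assignment is precisely a collection of $n$ triples $(i,j,r)$ — a pair $\{i,j\}$ placed in room $r$ — using each person and each room exactly once. This suggests taking the vertex set $X$ to be all admissible triples $(i,j,r)$ with $i<j$ and $r\in[n]$, assigning each the weight $w(i,j,r)=v_{ijr}+v_{jir}=V(\{i,j\},r)$, and joining two triples by an edge whenever they \emph{conflict}, i.e.\ whenever they share a person ($\{i,j\}\cap\{i',j'\}\neq\varnothing$) or a room ($r=r'$).

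First I would verify the correspondence in one direction: any valid assignment $A=(M,\sigma)$ selects $n$ pairwise person-disjoint and room-disjoint triples, which is exactly an independent set in $G$, and its total vertex weight equals $\sum_g V(S_g,\sigma(g))=\mathrm{SWF}(A)+R$. Conversely, any independent set is a set of pairwise non-conflicting triples, hence a valid \emph{partial} assignment. So independent sets and partial assignments are in bijection, with weight matching total valuation.

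Second, I would pin down the size of a maximum-weight independent set. The room-conflict edges force any independent set to contain at most $n$ vertices (at most one per room), and a size-$n$ independent set automatically occupies all $n$ rooms and, being person-disjoint, all $2n$ people — so it is already a full valid assignment. The remaining point is that the optimum has size exactly $n$: since every weight is nonnegative and any independent set of size $k<n$ leaves $2(n-k)$ people and $n-k$ rooms that can always be paired and placed, one can extend it to a size-$n$ set without decreasing its weight. Hence some maximum-weight independent set is a full assignment.

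Finally, combining the two directions gives $\max_I \sum_{x\in I} w(x) = \max_{(M,\sigma)} \mathrm{SWF}(M,\sigma) + R$, and the maximizers translate directly: reading off the triples of an MWIS yields a welfare-maximizing assignment, with the additive constant $R$ leaving the argmax unchanged. The step I expect to be the main obstacle is this completion argument — ruling out that the MWIS is a strict partial assignment — because it is precisely what upgrades the reduction from an inequality to an exact equivalence; it rests on the nonnegativity of $V$ and on the fact that, when $m=2n$, any leftover people and rooms can always be legally matched.
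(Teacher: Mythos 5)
Your construction is exactly the paper's: vertices are triples $(i,j,r)$ with $i<j$, weight $v_{ijr}+v_{jir}$, and edges between any two triples sharing a person or room, followed by the same two-sided argument that a maximum-weight independent set has size exactly $n$ (room conflicts give the upper bound; extension of any smaller independent set using leftover people and rooms gives the lower bound) and the same observation that total vertex weight equals total valuation, so the shift by $R$ does not change the argmax. If anything, your completion step is phrased slightly more carefully than the paper's (you only need weights to be nonnegative so the extension is weight-nondecreasing, whereas the paper asserts a positive-weight node), but the approach is essentially identical.
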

\begin{proof}
For the rent assignment setting let $X$ be the set of $(i, j, r)$ pairs for $i, j \in [m]$, $r \in [n]$, and $i < j$. To define the edges we assign the vertex $x \in X$ a set of three attributes $i$, $j$, $r$ and denote $|x_1 \cap x_2|$ as the number of attributes that $x_1$ and $x_{2}$ have in common. For example, $|(1, 2, r_0) \cap (2, 4, r_5)| = 1$ because they share the common attribute person $2$. Let $E$ be the set of \textit{edges}, where $e = (x_\ell, x_{\ell'}) \in E$ if $|x_\ell \cap x_{\ell'}| > 0$, meaning two nodes share at least a common roommate or room. Finally, we define the vertex weight function $w: X \to \mathbb{R}_{+}$ as the roommate valuation corresponding to $(i,j,r)$ so that $w(x) = v_{ijr} + v_{jir}$. 

Let $I$ be the maximum weight independent set of $G$. Note that $I$ has exactly $n$ nodes. Assume for the sake of contradiction that $I$ has $n + k$ nodes for some positive integer $k$. We see that independence would imply that there are $n+k$ distinct rooms, and we reveal a contradiction. Now assume $k$ is a negative integer, meaning $I$ has fewer than $n$ nodes. We know that there exists at least one free room and $2$ free tenants corresponding to a node $x \in X$ with positive weight by assumption, so again we reach a contradiction. As $I$ has size $n$ and all nodes are unconnected, the set must cover all $n$ rooms and all $2n = m$ potential roommates, representing a valid roommate assignment. 

As $I$ has the maximum vertex weight across all independent sets, we know that it maximizes $\sum_{x \in I}w(x)$ for those of size $n$. Note that this is exactly equal to the total tenant living valuations by construction. As rent stays the same, $I$ corresponds to the max welfare assignment. For an example, see section \ref{appendix: MWIS_alg_example}.
 
\end{proof}
As displayed in Figure \ref{fig2}, the central idea is that the nodes are all possible pairs of roommates and rooms, Furthermore, living arrangement valuations are represented by node weights and feasibility restrictions (someone can't live two places and one room can have one set of roommates) are represented by edges. Hence an independent (unconnected) set of nodes is a (partial) assignment. 

Note that Figure 2 displays the scenario in which $4$ people are assigned to $2$ rooms. Hence the MWIS has size $2$. We see that the combination $(1,2,r_2)$ and $(3,4,r_1)$ maximizes the node weight among non-adjacent edges, thus representing the welfare-maximizing assignment.
\begin{figure}[H]
    \centering
    \includegraphics[width=0.6\linewidth]{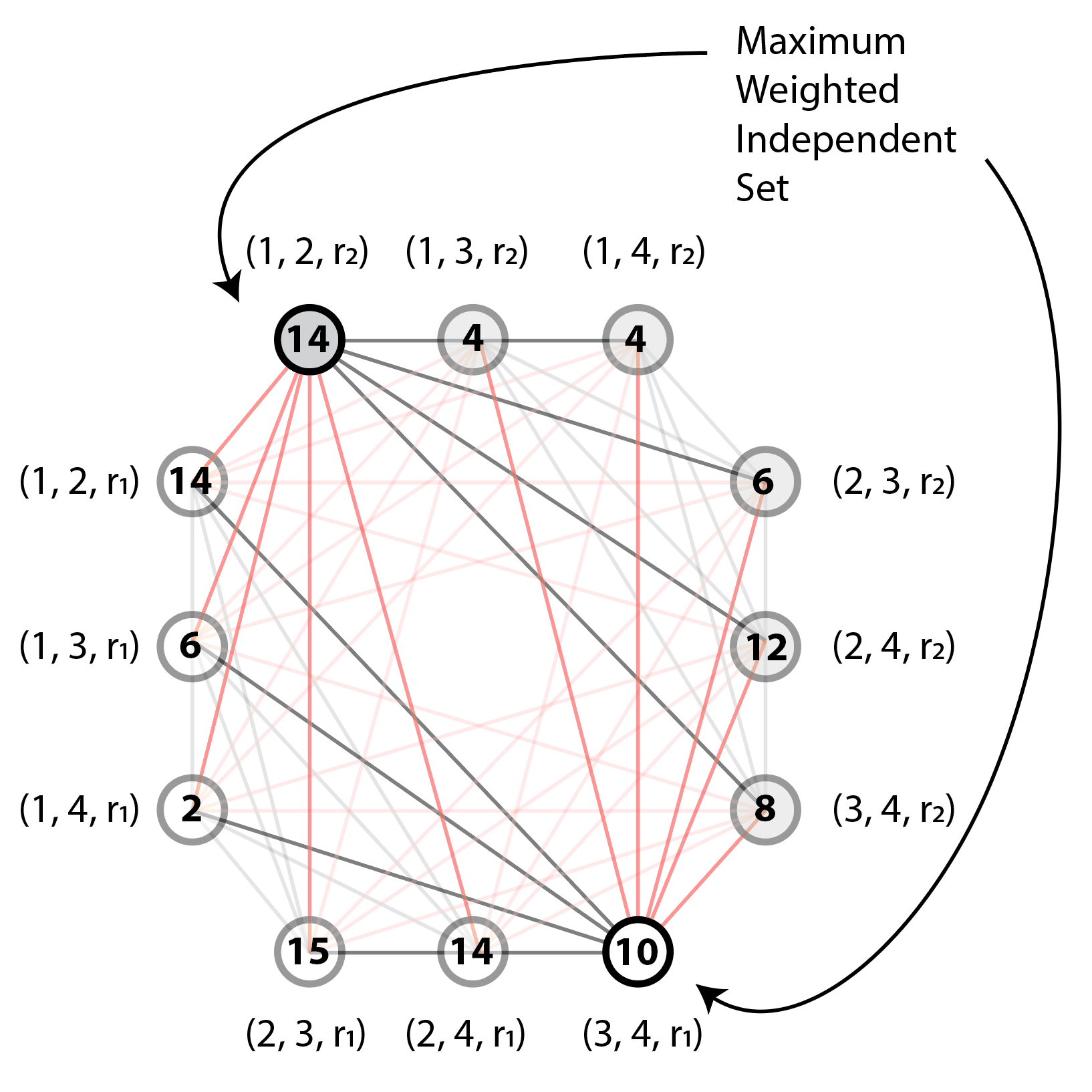}
    \caption{Maximum Weight Independent Set with Four Tenants and Two Rooms.}
    \label{fig2}
\end{figure}

\begin{theorem}\label{thm: MWIS}
For $m=2n$ and any $\varepsilon > 0$, there is a polynomial-time $(4/3 + \varepsilon)$-approximation algorithm for maximum social welfare assignment.
\end{theorem}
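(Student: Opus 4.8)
The plan is to leverage the preceding theorem, which identifies a maximum social welfare assignment (for $m = 2n$) with a maximum weight independent set in the graph $G$, and then to exploit the fact that $G$ is not arbitrary but the conflict graph of a $3$-set packing instance. Concretely, I would view each vertex $(i,j,r)$ as the three-element set $\{i, j, r\}$ drawn from the disjoint ground set $[m] \sqcup [n]$ (two ``people'' elements and one ``room'' element), so that two vertices are adjacent in $G$ exactly when their sets intersect. Under this identification an independent set in $G$ is precisely a collection of pairwise-disjoint triples, i.e.\ a packing, and its weight equals the total valuation $\sum_g V(S_g, \sigma(g))$. Since every set has size $3$, $G$ is $4$-claw-free, which is exactly the regime in which strong packing approximations are available.

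With the reduction in hand, I would invoke a local-search approximation for $3$-set packing. Starting from any maximal packing, one repeatedly looks for a constant-size \emph{improving swap} -- deleting $t$ chosen triples and inserting $t+1$ pairwise-disjoint triples that still avoid the rest of the packing -- and applies it whenever one exists. Bounding the swap size by a constant depending on $\varepsilon$ keeps the number of candidate swaps, and hence the running time, polynomial in $m$ and $n$ for each fixed $\varepsilon$. The Hurkens--Schrijver / Cygan analysis shows that a packing admitting no such improving swap has size at least a $\frac{1}{(k+1)/3 + \varepsilon}$ fraction of optimal; for $k = 3$ this is the claimed $\frac{1}{4/3 + \varepsilon}$ bound.

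Two feasibility points need care before reading off the welfare guarantee. First, the approximation returns a packing, which may leave some people and rooms uncovered; but if it uses $t$ triples it covers $2t$ people and $t$ rooms, leaving $2(n-t)$ people and $n-t$ rooms, so the remainder can always be paired into a complete assignment, and because all valuations are nonnegative this completion never decreases the objective. Second, I would translate the weight guarantee back through $\mathrm{SWF}(A) = \sum_g V(S_g, \sigma(g)) - R$; since $R$ is a fixed additive constant, the packing approximation directly controls the gross valuation $\sum_g V(S_g, \sigma(g))$, which is the quantity the reduction governs.

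The hard part will be the weights. The clean $4/3 + \varepsilon$ ratio is the classical bound for \emph{unweighted} $3$-set packing (equivalently, for cardinality independent set in $4$-claw-free graphs), whereas the valuations $v_{ijr}$ make this a genuinely weighted instance, and for weighted $k$-set packing the best local-search guarantees are substantially weaker (on the order of $(k+1)/2$ via Berman-type arguments). So the crux is to argue that the weighted local search can be pushed to $4/3 + \varepsilon$ here -- either by exploiting extra structure of $G$ beyond $4$-claw-freeness, or by a weighted improving-swap analysis tailored to this instance -- and to confirm that the additive $-R$ term does not erode the ratio once we pass from gross valuation to net social welfare. I would spend most of the effort verifying that this weighted step actually delivers the stated constant rather than the weaker weighted bound.
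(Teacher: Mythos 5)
Your proposal follows the paper's route almost exactly: the same identification of each vertex $(i,j,r)$ with a $3$-element set over the disjoint ground set of people and rooms, the same counting argument for $4$-claw-freeness (the paper's proof is precisely your observation, written as $\sum_{\ell=1}^{4}|x\cap x_\ell|\ge 4 > |x|$ with $|x|=3$), and the same appeal to a Cygan-style local-search algorithm with ratio $d/3+\varepsilon$ at $d=4$. So in structure there is nothing different between your argument and the paper's.

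The important point is the ``hard part'' you flag at the end: this is exactly where the paper's own proof is weakest, and your suspicion is well founded. The paper asserts that Cygan's algorithm achieves ratio $d/3+\varepsilon$ ``for MWIS in $d$-claw-free graphs,'' but Cygan's bounded-pathwidth local-search result is for the \emph{unweighted} problem (maximum cardinality independent set in $d$-claw-free graphs, equivalently unweighted $k$-set packing); for the weighted problem the known local-search guarantees are Berman-type, on the order of $d/2$, with subsequent improvements still bounded away from $d/3$. Since the valuations $v_{ijr}$ make this a genuinely weighted instance, the quoted constant does not follow from the cited result, and the paper supplies no weighted analysis or structural argument beyond $4$-claw-freeness to close this gap -- so your proposal is an honest account of the same incomplete argument rather than a resolution of it. Your two feasibility remarks are also glossed over by the paper, and one of them deserves emphasis: the additive $-R$ in $\mathrm{SWF}(A)=\sum_g V(S_g,\sigma(g))-R$ genuinely matters, because subtracting a constant can degrade or destroy a multiplicative approximation ratio; the theorem is only safe if read as a guarantee on the gross valuation $\sum_g V(S_g,\sigma(g))$ rather than on $\mathrm{SWF}$ itself.
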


\begin{proof} Firstly, we will show that $G$ is 4-claw free. A graph $G$ is $d$-claw-free if a $d$-claw is not an induced subgraph of $G$. This means that there is no vertex that is connected to $d$ different vertices that are not connected to one another. 

Next, assume for the sake of contradiction that a $4$-claw induced subgraph exists. Then $x$ is connected to $\{x_\ell\}_{\ell = 1}^{4}$ and all $x_\ell$ are pairwise disjoint. Since each $x_\ell$ is connected to $x$, $|x\cap x_\ell|\ge 1$. As $x_{\ell},x_{\ell'}$ are not connected, $|x_{\ell_1}\cap x_{\ell_2}|=0$. Therefore:
\[\left|\left(\cup_{\ell = 1}^{4}x \right)\cap x_\ell\right|=\sum_{\ell=1}^{4}|(x \cap x_\ell)| \ge 4> |x|.\]
This last step is a contradiction, since the cardinality of a set weakly decreases under intersections. As such, $G$ is 4-claw-free.  




Cygan \citep{6686187} proposed a polynomial time approximation algorithm for MWIS in $d$-claw-free graphs with approximation ratio $d/3 + \varepsilon$. Note that with respect to the rooming groups $M = \{S_g\}$, we can always apply algorithm \ref{alg:bma} to generate a max social welfare assignment with respect to $\{S_g\}$ and guarantee room envy-freeness. Note that construction of the graph $G$ takes polynomial time. Therefore, using the maximum weighted independent set construction, we can find a $(4/3+\varepsilon)$-approximation algorithm for maximum social welfare assignment.

\end{proof}
\begin{algorithm}
\caption{MWIS Algorithm For Roommates Problem}\label{alg:mwis}
\begin{algorithmic}[1]
\Require $m = 2n$
\State For every combination of roommates and rooms with $i < j$, create a set of vertices $V$ consisting of tuples $(i, j, r)$ and vertex weight $V(S_g = \{i, j\}, r).$
\State Add edges between vertices $v_{\ell_i}$ and $v_{\ell_j}$ if $|v_{\ell_i} \cap v_{\ell_j}| > 0$.
\State Find the \textbf{maximum weighted independent set} on the graph $G = (X, E).$ \\
\Return Allocation $A$ corresponding to the assignments given by the MWIS.
\end{algorithmic}
\end{algorithm}

\begin{corollary}
    For $m < 2n$, there is a polynomial time $(4/3 + \varepsilon)$-approximation algorithm for maximum social welfare assignment.
\end{corollary}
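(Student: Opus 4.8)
The plan is to reduce the $m<2n$ case to the already-solved $m=2n$ case of Theorem \ref{thm: MWIS} by padding the instance with phantom tenants. Concretely, I would introduce $2n-m$ dummy people, indexed $m+1,\dots,2n$, so that the augmented instance has exactly $2n$ people and $n$ rooms. The valuations are extended so that a dummy encodes the option of living alone: for a real tenant $i\in[m]$, a dummy $d$, and a room $r$, set $v_{idr}=v_{iir}$ and $v_{dir}=0$, so that $V(\{i,d\},r)=v_{iir}$ exactly recovers the single valuation; and set $v_{dd'r}=0$ for any two dummies $d,d'$, so that a dummy--dummy pair contributes nothing and represents an unoccupied room. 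Because $n\le m<2n$ we have $0<2n-m\le n$, so the padding is well defined and the augmented instance satisfies the hypotheses of Theorem \ref{thm: MWIS}.

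First I would verify that the optimum is preserved, i.e.\ that the maximum social welfare of the augmented $2n$-person instance equals that of the original instance. In one direction, any valid assignment of the original instance lifts to an assignment of the augmented instance of equal welfare: each of the $2n-m$ singletons $\{i\}$ is paired with a distinct dummy (there are exactly enough dummies), and each double is left unchanged, so the total valuation is unchanged. In the other direction, any assignment of the augmented instance projects to a valid original assignment of equal welfare by reading real--real pairs as doubles, real--dummy pairs as singles, and dummy--dummy pairs as unused rooms; the non-negativity of the single valuations $v_{iir}$ and of the dummy--dummy weight $0$ guarantees no welfare is gained or lost in either direction. Hence the two optima coincide.

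Next I would run the $m=2n$ procedure of Theorem \ref{thm: MWIS} on the augmented instance. Its graph $G$ is exactly the construction of that theorem applied to $2n$ people, so it is again $4$-claw-free and Cygan's algorithm returns an independent set whose weight is within a factor $4/3+\varepsilon$ of the maximum, all in polynomial time; constructing $G$ and the padding is clearly polynomial as well. The one point requiring care is that an \emph{approximate} independent set need not have size $n$, and hence need not by itself describe a full assignment that houses every tenant. I would resolve this by extending the returned set to a maximal independent set: since all pairwise vertices are present in $G$, any independent set of size below $n$ leaves a free room together with two free (possibly dummy) tenants, so a vertex may always be added until size $n$ is reached, and because every weight is non-negative this extension only weakly increases the total weight. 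A maximal, hence size-$n$, independent set covers all $n$ rooms and all $2n$ augmented tenants, so projecting out the dummies yields a valid assignment of the original instance whose welfare is still within $4/3+\varepsilon$ of optimal.

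Finally, as in the proof of Theorem \ref{thm: MWIS}, I would apply the bipartite-matching post-processing of Algorithm \ref{alg:bma} to the recovered rooming groups to obtain the welfare-maximizing room permutation for that grouping and thereby guarantee room envy-freeness. The main obstacle in this argument is not the approximation ratio, which is inherited verbatim from the $4$-claw-free structure, but the bookkeeping around the singles: one must check that the dummy encoding faithfully represents living alone and that dummy--dummy pairs (unoccupied rooms) are correctly accounted for, which is precisely what the two-directional welfare-preservation argument and the extension-to-maximal step are designed to handle.
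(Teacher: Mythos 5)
Your proposal is correct and follows essentially the same route as the paper: pad the instance with $2n-m$ ghost tenants, extend the valuations so that a real--dummy pair recovers the single valuation $v_{iir}$ and dummy--dummy pairs have weight zero, and invoke Theorem \ref{thm: MWIS}. The only differences are cosmetic or strengthening details---you place the single valuation asymmetrically ($v_{idr}=v_{iir}$, $v_{dir}=0$) where the paper splits it as $V_{iir}/2$ on each side (immaterial, since only the pair sum enters the node weight), and you explicitly verify welfare preservation in both directions and extend the approximate independent set to a maximal one of size $n$, points the paper's proof passes over more quickly.
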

\begin{proof}
    To reduce this to the setting of $m=2n$, we set the first $m$ individuals $1 \le i \le m$ to be real people, and the last $2n-m$ to be placeholders which do not correspond to any specific individual, called ``ghosts." Further, we redefine a new preference tensor $\tilde{V}$ as follows:
    \[\tilde{V}_{ijr} = \begin{cases} 
    V_{ijr} & i,j \le m \\
    \frac{V_{iir}}{2} & i \le m, j > m\\
    \frac{V_{jjr}}{2} & i > m, j \le m\\
    0 & \text{otherwise}. 
    \end{cases} \]
    There are three options for a given rooming group: it is made up of (1) only real people, (2) a mix of ghosts and people, and (3) only ghosts. In the first case $w(x) = V_{ijr} + V_{jir}$ as before. In the second case $w(x) = \frac{V_{iir}}{2} + \frac{V_{iir}}{2} = V_{iir}$ is the valuation of tenant $i$ living alone in room $r$. In the final case, $w(x) = 0$. Since ghosts living together have no impact on the weight of $I$, they will only do so if it is welfare-maximizing to leave one or more rooms unoccupied. Although it is possible in case (3) that $I$ won't be a rental assignment, we can simply fill the remaining rooms with the ghosts left over. As ghosts have no utility of living together or alone, the total weight is unchanged. Hence, replacing $\tilde{V}$ with $V$, we can directly apply Theorem \ref{thm: MWIS}, giving us the desired result.  
\end{proof}


\section{Fairness Guarantees}

In general, we break down the assignment problem into two discrete components: (1) finding an assignment that maximizes social welfare, and (2) ensuring that the tenant prices make the assignment as fair as possible. Here we provide the guarantee of roommate envy-freeness and present an integer program to minimize envy among tenants. 
\vspace*{-0.02in}
\subsection{Room Envy-Freeness (REF)}
Recall that person-envy freeness (PEF) is guarantee that no tenant is envious of any other tenant in the apartment. We first note that PEF is too strong for a general fairness concept:
\begin{theorem}\label{thm:PEF}
    The existence of a person envy-free price vector for the max-welfare assignment is not guaranteed.
\end{theorem}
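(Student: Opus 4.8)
The plan is to prove this negative result by exhibiting a single explicit instance whose unique welfare-maximizing assignment admits no person envy-free price vector. Since PEF is exactly Definition \ref{defn: e-EF} with $\varepsilon = 1$, it is a finite system of linear inequalities in the unknowns $\tilde{p} \in \mathbb{R}^m$ together with the budget constraint $\sum_i \tilde{p}_i = R$, so it suffices to display valuations for which some subset of the required inequalities is already infeasible. The standard way to expose such infeasibility is to find a cyclic chain of envy constraints along which the price variables telescope to zero, leaving behind an inequality on valuations alone that the chosen instance violates.

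Concretely, I would work with the smallest nontrivial case $m = 4$, $n = 2$, and target the assignment $A^* = (M^*, \sigma^*)$ with $S_1 = \{1,2\}$ placed in room $r_1$ and $S_2 = \{3,4\}$ placed in room $r_2$. The two constraints requiring that person $1$ not envy person $3$ and that person $3$ not envy person $1$ read
\[
v_{12 r_1} - \tilde{p}_1 \ge v_{14 r_2} - \tilde{p}_3, \qquad v_{34 r_2} - \tilde{p}_3 \ge v_{32 r_1} - \tilde{p}_1.
\]
Adding them cancels both $\tilde{p}_1$ and $\tilde{p}_3$ and yields $v_{12 r_1} + v_{34 r_2} \ge v_{14 r_2} + v_{32 r_1}$, independent of $R$. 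Hence, if the valuations are chosen so that $v_{14 r_2} + v_{32 r_1} > v_{12 r_1} + v_{34 r_2}$, these two mandatory PEF constraints are jointly unsatisfiable and no PEF price vector can exist, for any budget normalization.

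The only remaining, and main, obstacle is to choose the full tensor $V$ so that this reversed strict inequality holds while $A^*$ genuinely remains the (unique) maximum-welfare assignment; otherwise the theorem, which concerns the max-welfare assignment, would not apply. The tension is that social welfare depends on the symmetric sums $v_{ijr} + v_{jir}$, whereas the envy cycle depends only on the single directed terms $v_{14 r_2}$ and $v_{32 r_1}$. I would resolve it with asymmetric cross valuations: let persons $1$ and $3$ assign large value to the arrangements they would inherit by swapping (large $v_{14 r_2}$, $v_{32 r_1}$) while their prospective partners $4$ and $2$ assign low value to living with them (small $v_{41 r_2}$, $v_{23 r_1}$), keeping $v_{12 r_1}, v_{21 r_1}, v_{34 r_2}, v_{43 r_2}$ large enough to dominate. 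For instance, taking $v_{12 r_1} = v_{21 r_1} = v_{34 r_2} = v_{43 r_2} = 3$ and $v_{14 r_2} = v_{32 r_1} = 5$ with all other entries $0$ makes $A^*$ welfare-maximizing (value $12$, versus $10$ for the best competitor $\{1,4\}\mapsto r_2,\{2,3\}\mapsto r_1$ and $0$ for the rest), while the two displayed constraints collapse to $0 \ge 4$. I expect no genuine difficulty beyond the finite check that $A^*$ beats the five competing assignments, so the substance of the proof lies in engineering this asymmetry rather than in any deep argument.
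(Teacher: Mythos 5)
Your proposal is correct and follows essentially the same route as the paper: the paper likewise exhibits a small explicit instance ($m=4$, $n=2$, with room-independent valuations) whose max-welfare pairing is $\{1,4\},\{2,3\}$, and derives a contradiction by combining the two crossing envy constraints between persons $1$ and $3$ so that the prices cancel and an impossible valuation inequality remains. Your verification that the target assignment is uniquely welfare-maximizing and that the telescoped constraints collapse to $0 \ge 4$ checks out, so the argument is sound.
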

\begin{proof}
    Consider the following counterexample where room preferences don't matter, $m=2, n=4$, and rent is zero. Thus we are only looking for a envy-free roommate matching with prices. Let the preference tensor be defined as follows:
\[
V_{..r} = \begin{pmatrix}
0 & 12 & 2 & 8 \\
3 & 0 & 6 & 6 \\
2 & 6 & 0 & 11 \\
8 & 8 & 1 & 0 \\
\end{pmatrix} \quad \forall r
\]
where entry $(i,j)$ for $i \neq j$ denotes the utility of person $i$ living with person $j$. The max-welfare allocation is roommate pairs $(1,4)$ and $(2,3).$ The EF conditions for person $1$ envying person $3$ are $v_{14}-p_4 \ge v_{12}-p_2$ which implies $p_4 \le p_2-4$. Conversely, for person $3$ not envying person $1$ we have $v_{32}-p_2 \ge v_{34}-p_4$ which implies $p_2 + 5 \le p_4$. For both conditions to hold we would have $p_2+5 \le p_4 \le p_2 - 4$. Hence, an envy-free price vector for the welfare maximizing allocation does not always exist. 
\end{proof}
\vspace*{-0.1in}
Because of this, we relax our fairness guarantee to that of room envy-freeness (REF), and show that it is always attainable. 
\begin{theorem}\label{thm:REF_matching}
    For any matching $M=\{S_g\}_{g \in [n]}$, there exists price vector $p \in \mathbb{R}^n$ such that for the welfare-maximizing room assignments $\sigma^* \in \mathrm{argmax}_{\sigma \in \mathbb{S}_n}\sum_{r \in [n]}U(S_r;\sigma(r))$, $(\sigma^*, p)$ is room envy-free.  
\end{theorem}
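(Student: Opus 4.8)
The plan is to treat the fixed matching $M=\{S_g\}_{g\in[n]}$ as defining $n$ ``super-agents,'' one per rooming group, where group $S_g$ values room $r$ at $V(S_g,r)$. Under this reduction the problem collapses to the classical assignment (rent-division) problem: assign the $n$ groups to the $n$ rooms and choose $p\in\mathbb{R}^n$ so that no group prefers another group's (room, price) bundle. A useful preliminary observation is that the REF condition compares only price \emph{differences}, so it is invariant under adding a common constant to every $p_r$; hence the budget-balance requirement $\sum_r p_r = R$ can be imposed at the very end by a single uniform shift, and I may ignore it while constructing the prices.

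First I would write the welfare-maximizing room assignment as a linear program. Maximizing $\sum_{r}V(S_r,\sigma(r))$ over permutations $\sigma\in\mathbb{S}_n$ is equivalent to maximizing $\sum_{g,r}V(S_g,r)\,x_{gr}$ subject to $\sum_r x_{gr}=1$, $\sum_g x_{gr}=1$, and $x_{gr}\ge 0$. Because the Birkhoff--von Neumann theorem identifies the vertices of this polytope with the permutation matrices, the LP optimum is attained at some integral $\sigma^\star$, which is precisely a welfare-maximizing assignment.

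Next I would pass to the dual. Introducing multipliers $u_g$ for the group constraints and $p_r$ for the room constraints, the dual minimizes $\sum_g u_g+\sum_r p_r$ subject to $u_g+p_r\ge V(S_g,r)$ for every $g,r$. Strong duality yields an optimal dual solution $(u^\star,p^\star)$, and I claim the prices $p^\star$ are the desired REF prices. Indeed, complementary slackness on the edge chosen by $\sigma^\star$ gives $u_g^\star+p^\star_{\sigma^\star(g)}=V(S_g,\sigma^\star(g))$, while dual feasibility gives $u_g^\star+p^\star_r\ge V(S_g,r)$ for every room $r$. Subtracting these yields
\[V(S_g,\sigma^\star(g))-p^\star_{\sigma^\star(g)} \ge V(S_g,r)-p^\star_r \qquad \text{for all } g,r,\]
which, upon setting $r=\sigma^\star(g')$, is exactly $U(S_g,\sigma^\star(g),p^\star_{\sigma^\star(g)})\ge U(S_g,\sigma^\star(g'),p^\star_{\sigma^\star(g')})$, the room envy-free condition. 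Finally I would shift every price by $\tfrac{1}{n}\bigl(R-\sum_r p^\star_r\bigr)$ to enforce $\sum_r p_r=R$ without disturbing envy-freeness.

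The main obstacle is not any single calculation but setting up the duality scaffolding cleanly: one must verify the assignment LP is integral (so its continuous optimum coincides with the best permutation $\sigma^\star$) and then match the complementary-slackness conditions precisely to the form of the REF inequality. A fully combinatorial alternative---arguing that welfare-maximality forbids any improving cycle of room swaps and then defining prices as shortest-path potentials in the associated exchange graph---would avoid invoking LP duality at the cost of more bookkeeping; I would keep it in reserve if a self-contained argument is preferred. Either way, the dual prices are exactly what the Hungarian algorithm of Algorithm \ref{alg:bma} computes alongside $\sigma^\star$, so the construction is also algorithmic.
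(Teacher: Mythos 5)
Your proof is correct, and it opens with exactly the same reduction the paper uses: fix $M$ and treat each rooming group $S_g$ as a single quasi-linear ``super-agent'' with valuation $V(S_g,r)$, so that REF becomes ordinary envy-freeness in the classical $n$-agent rent-division problem. Where the paper stops at this point and cites the known result that, for quasi-linear utilities, an envy-free price vector always exists and coincides with a welfare-maximizing allocation, you instead prove that fact from scratch: Birkhoff--von Neumann gives integrality of the assignment LP, the dual variables on the room constraints serve as prices, and complementary slackness plus dual feasibility yield precisely the inequality $U(S_g,\sigma^\star(g),p_{\sigma^\star(g)}) \ge U(S_g,\sigma^\star(g'),p_{\sigma^\star(g')})$; the uniform shift to restore $\sum_r p_r = R$ is legitimate since REF depends only on price differences. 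The duality argument is sound (this is essentially the Shapley--Shubik assignment-game construction), and it buys two things the paper's citation does not: self-containedness, and the algorithmic observation that the Hungarian algorithm of Algorithm~\ref{alg:bma} already produces the dual potentials, so the REF prices come for free with $\sigma^\star$. One point worth making explicit if you write this up: the theorem quantifies over \emph{all} welfare-maximizing $\sigma^\star$, so you should note that any such permutation is an optimal solution of the LP (its value equals the LP optimum attained at a vertex), and hence complementary slackness applies to it paired with any optimal dual --- your argument implicitly uses this but does not say it.
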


\begin{proof}
\vspace*{-0.03in}
    Note that the utilities of each rooming group are quasi-linear and that REF is equivalent to envy-freeness for the rooming group. Therefore, with ``tenants'' $g \in [n]$ with utilities $U(S_g, \sigma(g),p_{\sigma(g)})=V(S_g,\sigma(g))-p_{\sigma(g)}$, it suffices to find a solution $(\sigma, p)$ to the individual rent-division problem. When (rooming group) utilities are quasi-linear, a (rooming group) EF solution always exists and corresponds to a welfare-maximizing allocation \citep{fairest}. 
\end{proof}

\begin{corollary}\label{corr:REF_max_SWF}
For any maximum social welfare assignment, $A^* = (M^*, \sigma^*)$, there always exists a price vector $p \in \mathbb{R}^n$ such that room envy-freeness is satisfied.
\end{corollary}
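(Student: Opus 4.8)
The plan is to show that Corollary \ref{corr:REF_max_SWF} follows almost immediately from Theorem \ref{thm:REF_matching}, since the corollary is a specialization of the theorem to the particular matching arising from a welfare-maximizing assignment. The key observation is that Theorem \ref{thm:REF_matching} is stated for \emph{any} fixed matching $M = \{S_g\}_{g \in [n]}$: it asserts that once the rooming groups are fixed, there exists a price vector making the welfare-maximizing room assignment $\sigma^*$ room envy-free. So the entire content of the corollary reduces to verifying that a maximum social welfare assignment $A^* = (M^*, \sigma^*)$ has the form to which the theorem applies.

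First I would take an arbitrary maximum social welfare assignment $A^* = (M^*, \sigma^*)$ and freeze its matching component $M^*$. I would then apply Theorem \ref{thm:REF_matching} to the fixed matching $M = M^*$, obtaining a price vector $p \in \mathbb{R}^n$ such that $(\hat{\sigma}, p)$ is room envy-free, where $\hat{\sigma} \in \mathrm{argmax}_{\sigma \in \mathbb{S}_n} \sum_{r \in [n]} U(S_r; \sigma(r))$ is the welfare-maximizing room assignment \emph{conditional on} $M^*$. The remaining step is to argue that $\sigma^*$ itself is such a conditional maximizer, so that I may take $\hat{\sigma} = \sigma^*$.

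The one substantive point — and the main (mild) obstacle — is this consistency check: I must confirm that the room-assignment component $\sigma^*$ of a globally welfare-maximizing $A^*$ also maximizes welfare over room permutations when $M^*$ is held fixed. This is true because $\text{SWF}(A) = \sum_g V(S_g, \sigma(g)) - R$ decomposes additively, and the rent $R$ is constant; hence if some alternative permutation $\sigma'$ yielded strictly greater $\sum_g V(S^*_g, \sigma'(g))$, then $(M^*, \sigma')$ would have strictly greater social welfare than $A^*$, contradicting global optimality of $A^*$. Therefore $\sigma^*$ lies in the conditional argmax set, and I may invoke Theorem \ref{thm:REF_matching} with $\hat\sigma = \sigma^*$.

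Finally I would conclude that the price vector $p$ produced for $M^*$ makes $(\sigma^*, p)$ room envy-free, which is exactly the claim of the corollary. No delicate estimates or case analysis are required; the proof is a one-line instantiation of the theorem together with the additive-separability argument that identifies $\sigma^*$ as a fixed-matching maximizer.
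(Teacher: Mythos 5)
Your proposal is correct and follows essentially the same route as the paper: the paper's proof likewise consists of observing that a globally welfare-maximizing $A^* = (M^*,\sigma^*)$ still maximizes welfare over room permutations once $M^*$ is fixed, and then applying Theorem \ref{thm:REF_matching} to the matching $M^*$. The only difference is cosmetic: where you instantiate the theorem's conclusion directly at $\sigma^*$ (reading it as holding for every element of the conditional argmax, which is how it is stated), the paper instead invokes the Second Welfare Theorem --- the standard quasi-linear fact that an envy-free price vector supporting one welfare-maximizing assignment supports them all --- to transfer $p$ to $\sigma^*$, which covers the same multiplicity-of-maximizers point you handle by instantiation.
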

\begin{proof}
    Since $A^\star$ maximizes total utility, it also maximizes utility given a rooming assignment $M=\{S_g\}_{g \in [n]}$. 
    A welfare-maximizing and REF solution $(\sigma,p)$ exists (Theorem \ref{thm:REF_matching}). By the Second Welfare Theorem, $(\sigma^*,p)$ is REF as well. 
\end{proof}

\begin{algorithm}[H]
\caption{$\varepsilon$-EF Algorithm}\label{alg:e-EF}
\begin{algorithmic}[1]
\Require $(M, \sigma)$ is a welfare-maximizing assignment
\State Construct new utilities matrix $V'\in \mathbb{R}^{m \times m}$ where $v'_{ij} = 0$ if $\{i, j\} \in M$ and $v'_{ij} = v_{ij'\sigma(g')}$ where $\{j, j'\} \in S_{g'}$. This means that player $i$'s valuation of player $j$ is the valuation of $i$ living with $j$'s roommate in $j$'s assigned room. Additionally, for notation, $v'_i$ represents the valuation of $i$ of its current assignment.
\State \textbf{Non-equal room-price sharing}: Compute a price vector $\tilde{p} \in \mathbb{R}^{m}$ by solving the linear program:
\begin{align*}
    \min\ &\varepsilon \\
    \text{s.t.: } &v'_{i} - \tilde{p}_i \geq \dfrac{1}{\varepsilon} (v'_{ij} - \tilde{p}_j) \quad\quad\quad \forall i, j \in [m] \\
    &\quad(\textbf{$\varepsilon$-EF constraint})\\
    &(v'_i + v'_{i'}) - (\tilde{p}_i + \tilde{p}_{i'}) \geq (v'_j + v'_{j'}) - (\tilde{p}_j + \tilde{p}_{j'}) \\
    &\quad \forall \{i, i'\} \neq \{j, j'\} \in M \quad\quad (\textbf{REF constraint})\\
    &\sum_{i = 1}^m \tilde{p}_i = 1 \quad\quad\quad \forall i \in [m]
\end{align*}
\State \textbf{Equal room-price sharing}: Let $r(i)$ be the room person $i$ is assigned to. Compute a price vector $p \in \mathbb{R}^{n}$ by solving the linear program:
\begin{align*}
    \min\ &\varepsilon \\
    \text{s.t.: } &v_{i}' - \frac{p_{r(i)}}{2} \geq \dfrac{1}{\varepsilon} \left(v'_{ij} - \frac{p_{r(j)}}{2}\right) \quad\quad\quad \forall i, j \in [m] \\
    &\quad(\textbf{$\varepsilon$-EF constraint})\\
    &(v'_i + v'_{i'}) - p_{r(i)} \geq (v'_j + v'_{j'}) - p_{r(j)} \\
    &\quad \forall \{i, i'\} \neq \{j, j'\} \in M \quad\quad (\textbf{REF constraint})\\
    &\sum_{r = 1}^n p_r = 1 \quad\quad\quad \forall r \in [n]
\end{align*}
\end{algorithmic}
\end{algorithm}


\subsection{Fairness conditions given REF}
To decide among REF price vectors and split rent among roommates we rely on minimizing $\varepsilon$-EF (\ref{defn: e-EF}). Recall that this refers to ensuring that one's utility for rooming with someone else is at most $\varepsilon$ times what they currently receive given the max welfare assignment. 

In practical settings, it is possible that roommates will be expected to share rent for a room equally. Even though envy of roommate is not well-defined, it may simply be considered unfair for two roommates to pay different prices. Therefore, in Algorithm 4, we present constructions with both equal and unequal splitting of rent among roommates, alongside the general option.

\section{Numerical Results}
To understand how the proposed algorithms and fairness guarantees work in practical situations, we simulated the roommate rent division problem in Python. We wish to quantitatively examine the performance of greedy and greedy with bipartite matching to MWIS, and whether the two envy-reducing price vector algorithms in \ref{alg:e-EF} produce similar fairness results.

\subsection{Greedy and Greedy Matching}
To simulate greedy and greedy matching algorithms, we first calculated the optimal solution (max social welfare) by using brute force. The runtime of brute force is exponential, making instances with large $n$ impractical to calculate (See \ref{appendix: brute}).

We first looked at how well the greedy algorithm performed against brute force for a range of $m$ and $n$ values below $n = 5$. In each simulation, we first generate the random utility matrices for each player so that $v_{ijr}$ are iid $\mathrm{Unif}(0,1)$. Next we calculated the social welfare of the assignment given by the greedy algorithm (see Section 4) as well as the greedy algorithm with bipartite matching (see Section 4.1). As shown in Figure 3, we see that the greedy algorithm produces an assignment that captures at least 90\% the maximum social welfare, while the greedy algorithm with bipartite matching improves the bound to 95\%.
\begin{figure}[H]\label{fig:Greedy_Greedy Match}
    \centering
    \includegraphics[width=0.60\textwidth]{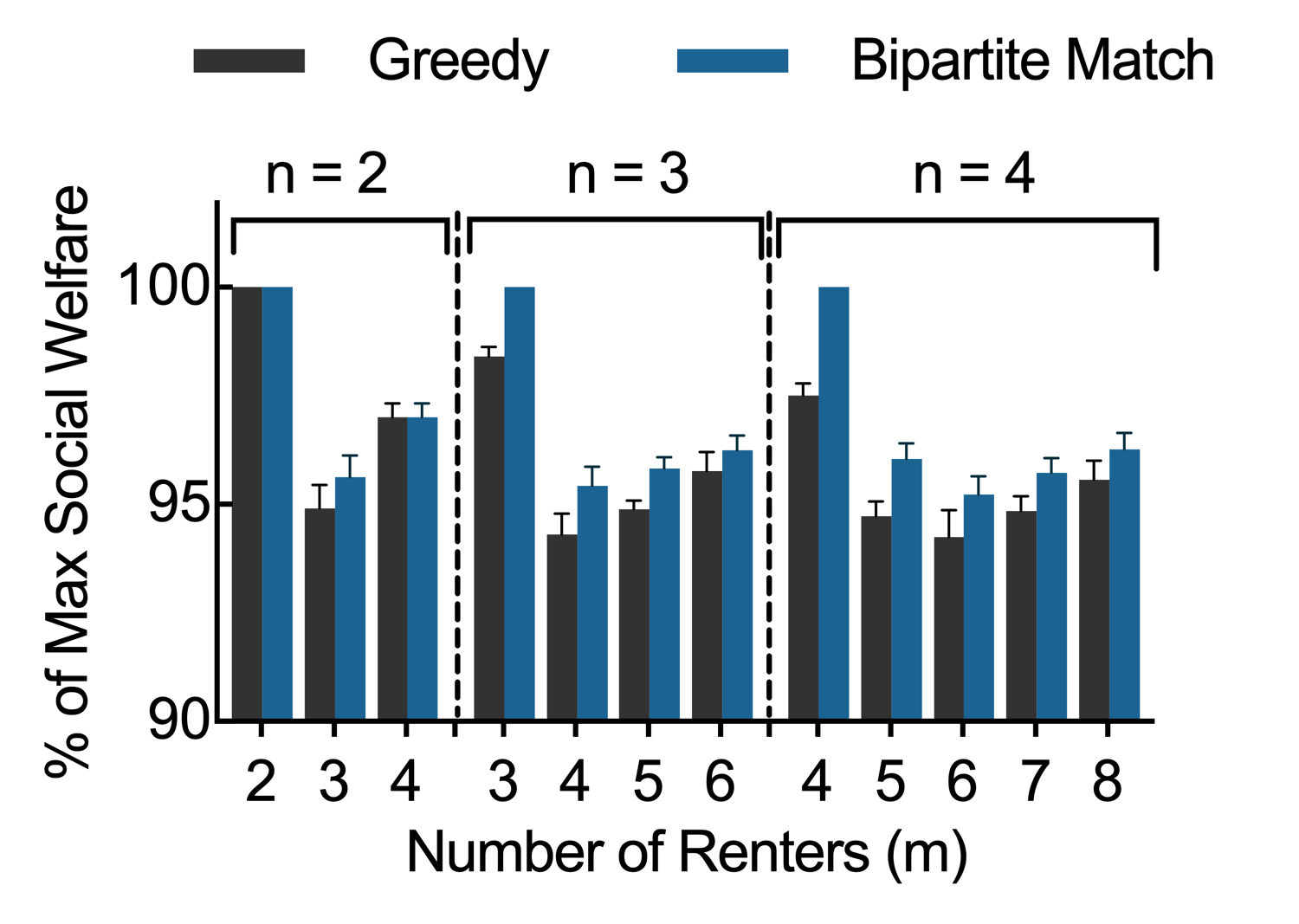}
    \caption{Mean percent of max social welfare generated by the greedy and greedy with bipartite matching algorithms for $10$ trials each of $200$ simulations.}
\end{figure}

\begin{figure}[H]\label{fig:Greedy_Greedy Match alpha}
    \centering
    \includegraphics[width=0.60\textwidth]{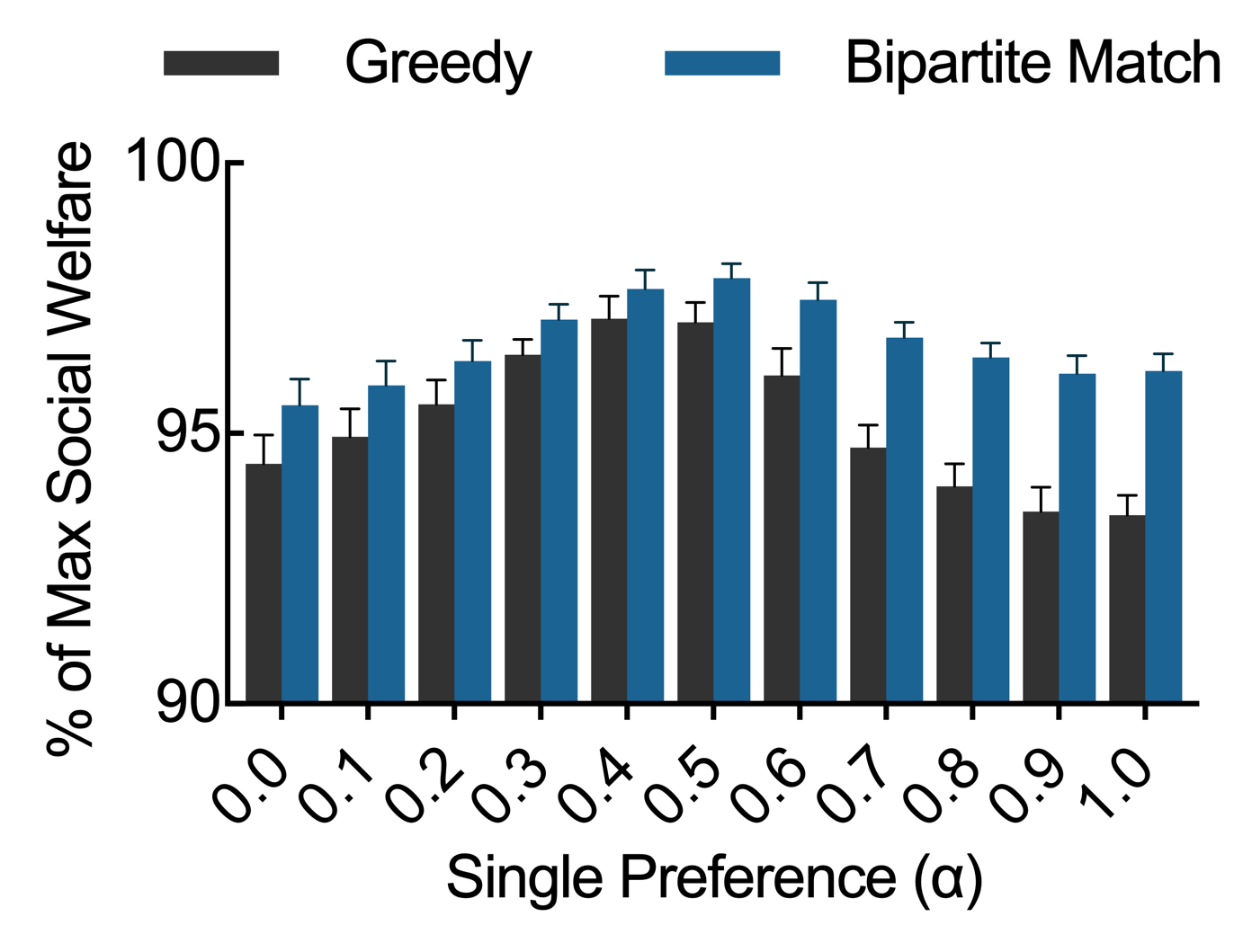}
    \caption{Mean percent of max social welfare generated for two greedy algorithms plotted against single preference for $50$ trials each of $200$ simulations.}
\end{figure}
However, in most real-life scenarios, valuation matrices are not exactly random. Often people have a preference for singles compared to doubles, and we wanted to see how our algorithm works when this is accounted for. We introduced a "single preference" parameter ($\alpha$) to the model, where a player would have $\alpha$ added to its valuations to all rooms where they live by themselves. Henceforth, we set $\alpha = 0.5$ so that $v_{iir}$ are iid $\mathrm{Unif}(0.5,1.5)$. Figure 4 shows the greedy and greedy with bipartite matching solutions with this parameter for $m = 4$ and $n = 3$. Note that bipartite matching consistently performs better than the greedy algorithm, but both algorithms' efficiency drops as $\alpha$ increases beyond $0.5$.

\subsection{MWIS}

Using an optimization package (Gurobi) in Python, we implemented the MWIS algorithm for the roommates problem. 


\begin{figure}[H]
    \centering
    \includegraphics[width=0.7\textwidth]{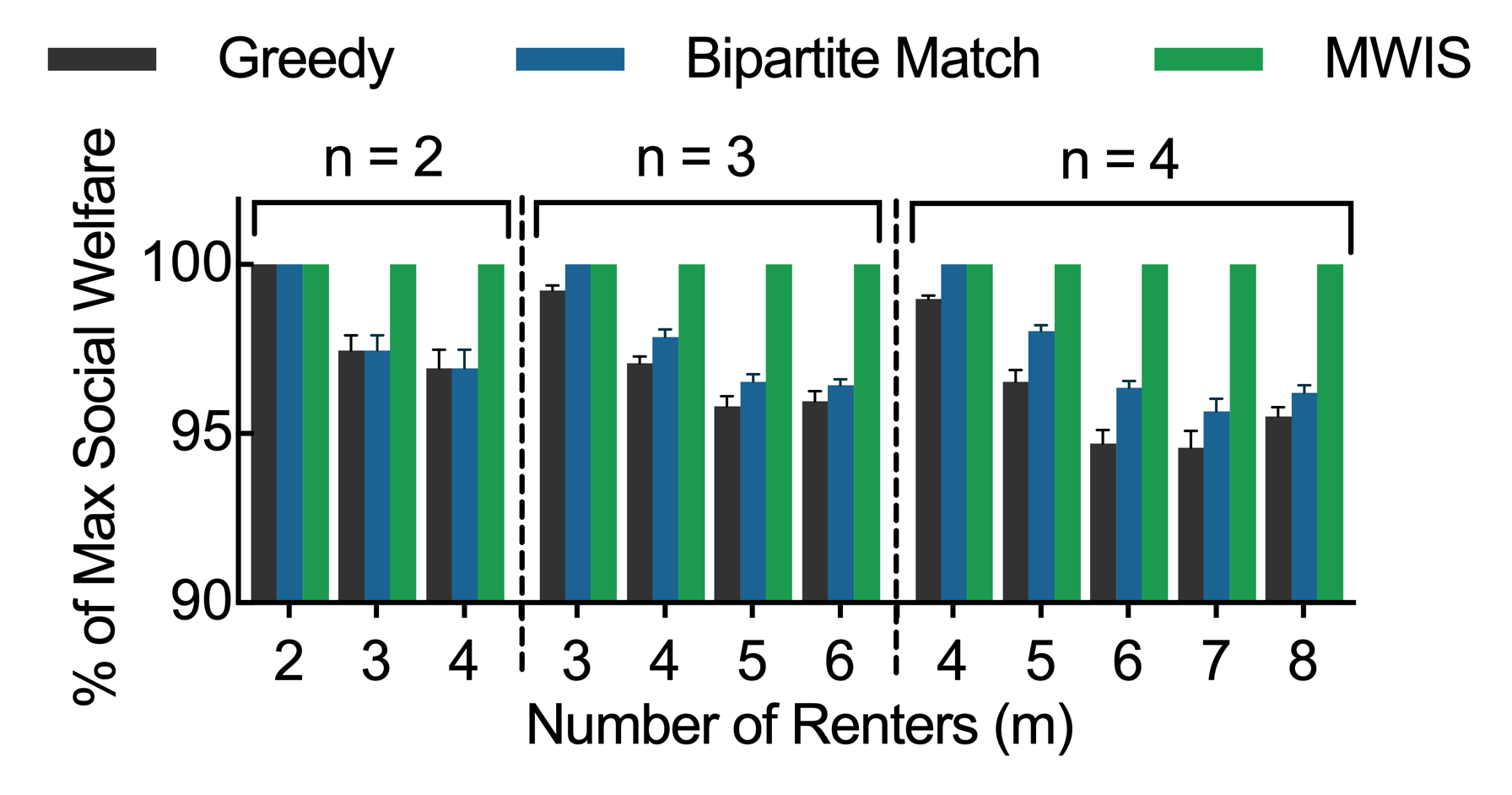}
    \caption{A comparison of greedy, greedy with bipartite matching, and MWIS algorithms. MWIS consistently outperforms both alternatives and returns a welfare-maximizing assignment when compared against brute force.}
\end{figure}

As shown in Figure 5, we ran the algorithm on varied values of $n$. For $n \leq 5$, we were able to enumerate all the possible room combinations, plotting it against max social welfare, and see that MWIS always result in the optimal assignment whereas greedy and greedy with bipartite matching does not.

Furthermore, for small instances ($n \leq 12$), the algorithm runs in polynomial $O(n^3)$ time as shown in Figure 6 and runs relatively fast for real-world purposes. 
\begin{figure}[H]
    \centering
    \includegraphics[width=0.55\textwidth]{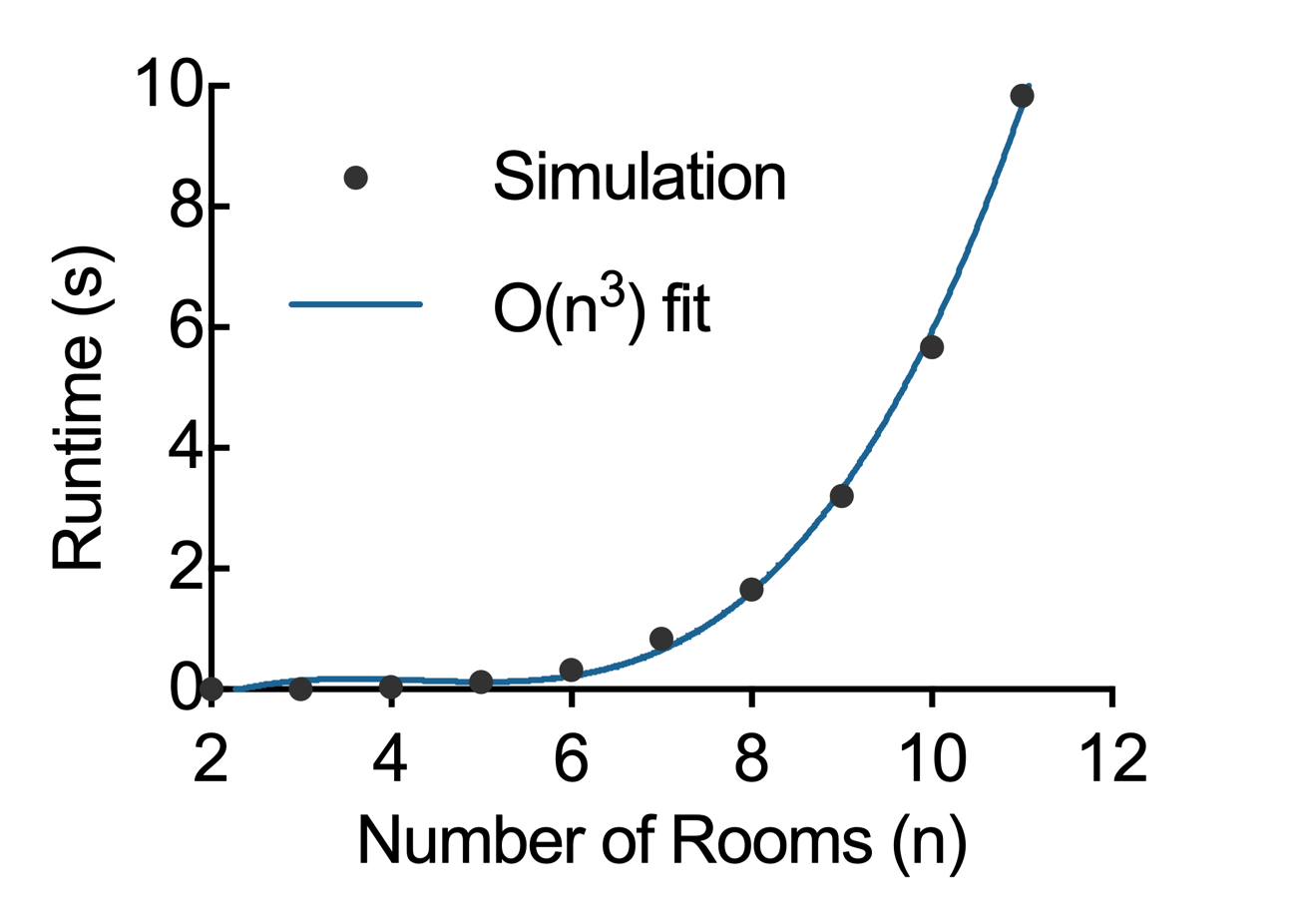}
    \caption{Runtime analysis of MWIS}
\end{figure}
However, the existence of an algorithm that finds the MWIS solution for graphs like $G$ in polynomial time is still an open question in graph theory. Related graphs such as claw-free graphs ($G$ is $4$-claw free) and perfect graphs have been shown to have polynomial time MWIS algorithms; this line of literature has overturned previous existing NP-hardness results. Therefore, we propose a conjecture that there exists such an algorithm for the set of graphs specified by roommate matching instances:

\begin{conj}\label{conj: polyMWIS}
There exists a polynomial time algorithm to find the maximum weighted independent set for the line graph of a 3-uniform hypergraph.
\end{conj}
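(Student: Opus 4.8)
The plan is to first translate the statement into its combinatorial core. An independent set in the line graph $L(H)$ of a $3$-uniform hypergraph $H$ is exactly a \emph{matching} of $H$, i.e.\ a family of pairwise-disjoint hyperedges; weighting the vertices of $L(H)$ by the weights of the corresponding hyperedges makes MWIS on $L(H)$ identical to \emph{maximum-weight $3$-set packing} (equivalently, maximum-weight hypergraph matching). For the roommate instances this is precisely the assignment that pairs the $2n$ people and bijects the pairs to rooms, so a positive resolution of the conjecture would in particular yield an exact polynomial algorithm for the max-welfare assignment. Guided by the polynomial cases cited in the paper, the natural route to a proof is polyhedral and structural: exhibit $L(H)$ as a member of a graph class on which MWIS is known to be exactly solvable in polynomial time.

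Concretely, the key steps I would carry out, in order, are as follows. First, I would try to locate $L(H)$ inside a tractable class: the claw-free route (Minty, Sbihi, and the later work of Nakamura--Tamura and Faenza--Oriolo--Stauffer) is the first candidate, but Theorem \ref{thm: MWIS} already shows these graphs contain $4$-claws, so claw-freeness fails and that machinery does not apply. Second, I would attempt the perfect-graph route via the Gr\"otschel--Lov\'asz--Schrijver ellipsoid algorithm, which would require showing $L(H)$ is perfect---generally false, since induced odd holes (a cyclically arranged family of $3$-sets) appear already for small $H$. Third, I would fall back to a direct polyhedral description, trying to prove that the natural LP relaxation of the hypergraph matching polytope is integral (for instance via total dual integrality) or to design a primal--dual combinatorial algorithm, possibly after restricting to the partite structure of roommate instances (two ``people'' coordinates and one ``room'' coordinate).

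The step I expect to be decisive is the third, and I expect it to be fatal rather than merely hard: maximum-weight $3$-set packing generalises $3$-dimensional matching, which is NP-complete (Karp), so an exact polynomial algorithm for the conjecture as literally stated would imply $\mathrm{P}=\mathrm{NP}$. Moreover the obstruction survives the restriction to roommate instances. Taking people $=X\sqcup Y$ and rooms $=Z$, and assigning weight $1$ to the pair-in-room $(\{x,y\},z)$ exactly when $(x,y,z)$ is a triple of a given $3$-dimensional matching instance (and weight $0$ otherwise), the maximum-weight roommate assignment equals the size of the maximum $3$-dimensional matching, so even this subclass is NP-hard. Consequently I would not expect the polyhedral route to close, and the honest outcome of this plan is more likely a \emph{refutation} (conditional on $\mathrm{P}\neq\mathrm{NP}$), together with the caveat that the empirically observed $O(n^3)$ behaviour reflects average-case and small-$n$ tractability rather than a worst-case guarantee. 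The salvageable positive target would be to isolate a genuinely restricted subfamily of $H$---for example bounded treewidth of an auxiliary incidence graph, or an interval/laminar room structure---on which LP integrality provably holds.
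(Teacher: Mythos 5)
You should first be aware that the paper contains no proof of this statement: it is posed as a conjecture, and the paper's only supporting material is the heuristic discussion in Appendix A.4 (line graphs of regular graphs are perfect; recent work of Dallard et al.\ on $\{K_{1,d},S,T\}$-free graphs), together with empirical evidence that a Gurobi ILP solves small random instances quickly. Your attempt therefore cannot be compared to a proof in the paper, but it can be assessed on its merits, and its central observation is correct and decisive: an independent set in the line graph of a $3$-uniform hypergraph is exactly a matching of the hypergraph, so MWIS on this graph class is precisely maximum-weight $3$-set packing, which contains $3$-dimensional matching --- one of Karp's NP-complete problems --- as the special case of $0/1$ weights on a tripartite vertex set. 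Your embedding of $3$DM into roommate instances (people $=X\sqcup Y$, rooms $=Z$, weight $1$ exactly on the triples of the $3$DM instance, padding a partial packing to a full assignment with weight-$0$ groups) is sound, so the hardness survives even the restriction to the structured instances the paper cares about. Hence the conjecture as literally stated is false unless $\mathrm{P}=\mathrm{NP}$, and this is in fact consistent with the paper's own citation of \citep{chan_assignment_2016} for NP-hardness of welfare maximization; the appendix's suggestion that a positive resolution would ``overturn the NP-hardness result'' misreads the cited graph-theory literature, which resolves previously \emph{open} classes rather than contradicting established reductions. Your closing reconciliation --- that the observed $O(n^3)$ runtime reflects average-case behavior on small random instances, not a worst-case guarantee --- is the correct reading of the paper's Figure 6.

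One small correction to your first step: Theorem \ref{thm: MWIS} proves the graph is $4$-claw-\emph{free}, i.e.\ that it contains no induced $K_{1,4}$; it does not show the graph contains $4$-claws. The reason the Minty--Sbihi claw-free machinery fails is that the graph is not $K_{1,3}$-free, which Theorem \ref{thm: MWIS} does not address but which is easy to exhibit directly: for $n$ large enough, the vertex $(1,2,r_1)$ is adjacent to the pairwise nonadjacent vertices $(1,3,r_2)$, $(2,4,r_3)$, and $(5,6,r_1)$. This does not affect your conclusion, since the claw-free route was a dead end either way, but the attribution should be fixed. Your salvage suggestion --- isolating restricted hypergraph families (bounded treewidth of the incidence structure, laminar or interval room structure) where the matching LP is integral --- is the right way to recast the conjecture into a form that is not immediately refuted, and would be a more defensible replacement for the statement as the paper currently poses it.
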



\subsection{Person Envy-Freeness}

Using greedy with bipartite matching, we analyzed whether person envy-free solutions exist. As we cannot provably guarantee PEF for all welfare-maximizing allocations (Theorem \ref{thm:PEF}), we plot the corresponding $\varepsilon$ value for the price assignment. Using simulation results with an integer program (Algorithm \ref{alg:e-EF}), we did not find any PEF solutions for both MWIS and greedy with bipartite matching. Over 50\% of all solutions were 4-EF for MWIS, and almost all solutions are 10-EF. We also observe that MWIS consistently outperforms greedy with bipartite matching in finding more fair solutions.



\begin{figure}[ht!]
    \centering
    \includegraphics[width=0.50\textwidth]{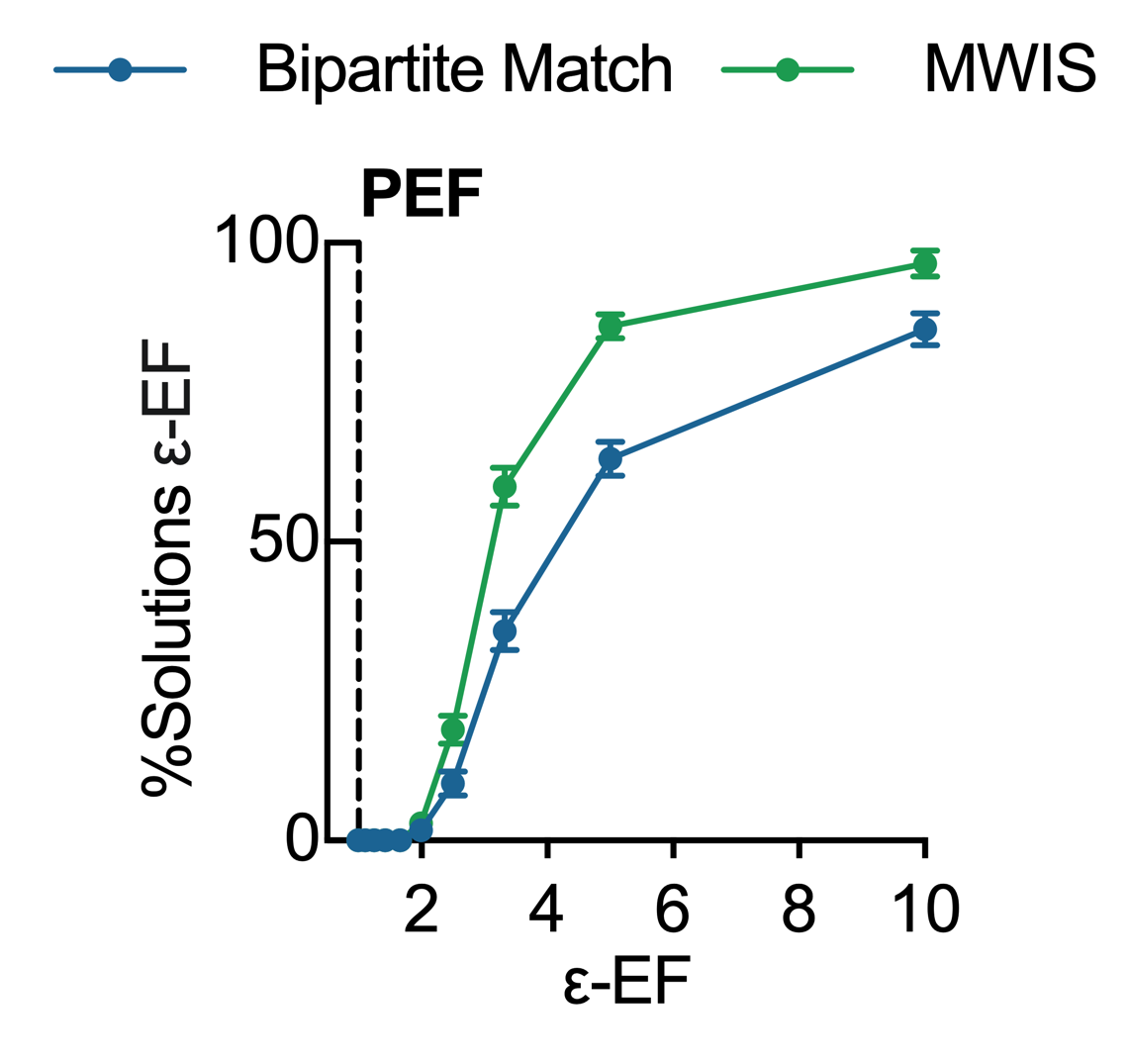}
    \caption{Numerical results of $\varepsilon$-EF for 10 trials of 200 simulations of $m = 6$, $n = 3$.}
\end{figure}

\begin{figure}[ht!]
    \centering
    \includegraphics[width=0.65\textwidth]{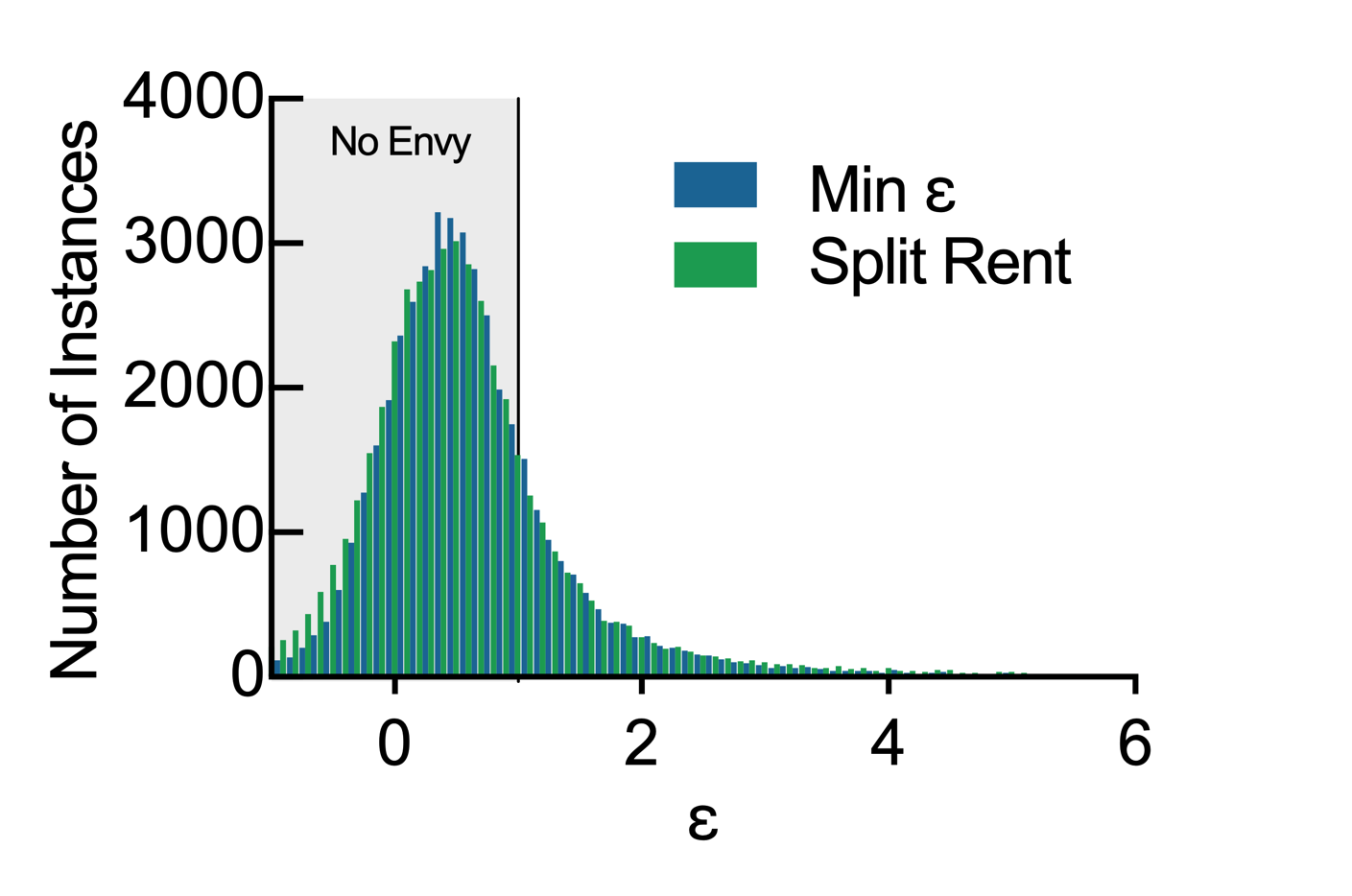}
    \caption{Numerical results of envy distributions for 2000 simulations of $m = 6$, $n = 3$.}
\end{figure}

Since PEF is unachievable globally, we looked at the distribution of individual envies with the two price vector algorithms given in \ref{alg:e-EF} (Figure 8). We note that tenants have no envy towards other tenants about 80\% of the time with the max social welfare assignment given by MWIS, and that splitting rent and min-$\varepsilon$ algorithms give approximately equally-fair solutions.


\section{Conclusion}
In this paper, we have addressed the complex problem of fair rent division with tenants who have preferences over rooms and roommates. Our study extended traditional models of rent division to include scenarios where room-sharing is necessary for only a subset of rooms in the house. This is a common but under-explored situation in housing arrangements. For instance, our algorithms could be valuable for the selection of college roommates where living preferences are inferred from survey responses. 
For the joint-preferences scenario, we show that despite it appearing both computationally challenging and unlikely to provide some fairness guarantees, it is possible to provably find the maximum social welfare assignment for real-world cases, and room envy-freeness is always possible. Moreover, our algorithms could be applied to create assignments in other scenarios, such as the selection of projects (rooms) and partners (roommates) for a workplace assignment. More specifically, companies may want to maximize total productivity given than some individuals may work together better and have aptitudes for different projects.

There are many areas of future research. The most obvious one is working towards a definitive proof for Conjecture \ref{conj: polyMWIS}. Others include questions such as,
\begin{itemize}
    \item Is there a fairness notion stronger than REF that can be provably found? Potentially one could explore an $\varepsilon$ bound for $\varepsilon$-PEF under different preference distributions. 
    \item Under what settings do our results extend to more than $1$ roommate?
\end{itemize}

The strategies developed in this study offer promising solutions that balance fairness and efficiency. As housing dynamics evolve and the prevalence of shared living arrangements increases, the need for effective rent division algorithms will become ever more critical. Our work contributes foundational insights and tools that can adapt to these changes.

\newpage
\bibliography{arxiv}

\newpage
\renewcommand{\thefigure}{A\arabic{figure}}
\setcounter{figure}{0}
\appendix
\section{Appendix}
\subsection{Discussion of Brute Force Algorithm}\label{appendix: brute}

We will see that the ability of an algorithm to do an exhaustive search is reasonable for small groups (a couple of friends living together), but scales very quickly. Note that $2n-m$ individuals will be alone, and $2m-2n$ individuals will have roommates. Therefore the number of options for rooming groups is the number of options for the subset assigned to singles, times the number of options of rooming groups among the remaining people:
\begin{align*}
    \text{\#\{rooming groups\} }=
    \frac{{m \choose 2n-m}\cdot (2m-2n)!}{(m-n)!2^{m-n}}
\end{align*}
Here we use that the number of ways to assign the remaining $2m-2n$ individuals to doubles is the number of way to assign $2m-2n$ to $m-n$ equal groups where order does not matter. Likewise, the number of ways to assign $n$ rooming groups to $n$ rooms is $n!$. Hence, an exhaustive search algorithm that looks over all possible rooming group and room pairs must evaluate
\[\text{\#\{rooming groups\} }\cdot \text{\#\{ways to assign groups to rooms\} } =\frac{{m \choose 2n-m}\cdot  (2m-2n)!\cdot n!}{(m-n)!2^{m-n}}\]
options. 
When $(m,n) = (4,3)$, there are $36$ assignment options. However, when $(m,n) = (12,9)$, there are more than $5$ billion options!

\subsection{Greedy Algorithm Example}
\label{appendix: greedy ex}

Let there be an apartment with three players and two rooms. Consider the following valuations for each player:

\begin{table}[ht!]
    \begin{tabular}[t]{|c| c | c|} 
     \multicolumn{3}{l}{Person 1 ($V_1$): } \\
     \hline
     Roommate & $r_1$ & $r_2$ \\ 
     \hline\hline
     1 & 10 & 5 \\ 
     \hline
     2 & 9 & 8 \\
     \hline
     3 & 2 & 2 \\
     \hline
    \end{tabular}
    \hfill
    \begin{tabular}[t]{|c| c | c|} 
     \multicolumn{3}{l}{Person 2 ($V_2$): } \\
     \hline
     Roommate & $r_1$ & $r_2$ \\ 
     \hline\hline
     1 & 6 & 8 \\ 
     \hline
     2 & 3 & 14 \\
     \hline
     3 & 2 & 5 \\
     \hline
    \end{tabular}
    \hfill
    \begin{tabular}[t]{|c| c | c|} 
     \multicolumn{3}{l}{Person 3 ($V_3$): } \\
     \hline
     Roommate & $r_1$ & $r_2$ \\ 
     \hline\hline
     1 & 4 & 2 \\ 
     \hline
     2 & 5 & 1 \\
     \hline
     3 & 6 & 6 \\
     \hline
    \end{tabular}
\end{table}

We can now generate the set of $T$ tuples $(i, j, r)$ and their respective utilities.

\begin{center}
\begin{tabular}{|c | c | c | c||} 
 \hline
 Roommate Pair & Room & Valuation \\ [0.5ex] 
 \hline\hline
 (1, 1) & $r_1$ &  10\\ 
 \hline
 (1, 1) & $r_2$ &  5\\ 
 \hline
 (2, 2) & $r_1$ &  3\\
 \hline
 (2, 2) & $r_2$ &  14\\
 \hline
 (3, 3) & $r_1$ &  6\\
 \hline
 (3, 3) & $r_2$ &  6\\
 \hline
 (1, 2) & $r_1$ &  9 + 6 = 15\\ 
 \hline
 \colorbox{mygreen}{$(1, 2)$} & \colorbox{mygreen}{$r_2$} &  \colorbox{mygreen}{$8 + 8 = 16$}\\
 \hline
 (1, 3) & $r_1$ &  2 + 4 = 6\\
 \hline
 (1, 3) & $r_2$ &  2 + 2 = 4\\
 \hline
 (2, 3) & $r_1$ &  2 + 5 = 7\\
 \hline
 (2, 3) & $r_2$ &  5 + 1 = 6\\
 \hline
\end{tabular} \\
\end{center}

Next the algorithm identifies all of the tuples that share a person or room with the chosen tuple. The former is in red and the latter is in green in the following table.

\begin{center}
\begin{tabular}{|c | c | c | c||} 
 \hline
 Roommate Pair & Room & Valuation \\ [0.5ex] 
 \hline\hline
 \colorbox{myred}{($1, 1$)} & $r_1$ & 10\\ 
 \hline
 \colorbox{myred}{($1, 1$)} & \colorbox{myred}{$r_2$} & 5\\ 
 \hline
 \colorbox{myred}{($2, 2$)} & $r_1$ & 3\\ 
 \hline
 \colorbox{myred}{($2, 2$)} & \colorbox{myred}{$r_2$} & 14\\ 
 \hline
 (3, 3) & $r_1$ & 6\\
 \hline
 (3, 3) & \colorbox{myred}{$r_2$} & 6\\
 \hline
 \colorbox{myred}{(1, 2)} & $r_1$ & $9 + 6 = 15$\\ 
 \hline
 \colorbox{mygreen}{(1, 2)} & \colorbox{mygreen}{$r_2$} & \colorbox{mygreen}{$8 + 8 = 16$}\\
 \hline
 \colorbox{myred}{(1}, 3) & $r_1$ &  2 + 4 = 6\\
 \hline
 \colorbox{myred}{(1}, 3) & \colorbox{myred}{$r_2$} &  2 + 2 = 4\\
 \hline
 \colorbox{myred}{(2}, 3) & $r_1$ &  2 + 5 = 7\\
 \hline
 \colorbox{myred}{(2}, 3) & \colorbox{myred}{$r_2$} &  5 + 1 = 6\\
 \hline
\end{tabular} \\
\end{center}

It then deletes all of the highlighted entries. This leaves only one remaining option:

\begin{center}
\begin{tabular}{|c | c | c | c||} 
 \hline
 Roommate Pair & Room & Valuation \\ [0.5ex] 
 \hline\hline
 \colorbox{mygreen}{(3, 3)} & \colorbox{mygreen}{$r_1$} & \colorbox{mygreen}{6}\\
 \hline
\end{tabular} \\
\end{center}

So, the algorithm add this option to the assignment list, removes this row from the table, and terminates. It outputs an assignment of players 1 and 2 living in room $a$ and player 3 living in room $b$ with a social welfare of 22. It is trivial to see that this is a valid assignment.

\subsection{MWIS Algorithm Example}\label{appendix: MWIS_alg_example}

We give an example for a graph with $m = 4$ and $n = 2$, as shown in Figure 7. The edges in black represent the edges between vertices with the same room ($r_1$ or $r_2$), while the edges in red represent the edges between vertices of different rooms. For clarity, the complement graph of $G$ showing which vertices are not connected is shown on the right of Figure 7.

\begin{figure}[ht!]
    \centering
    \includegraphics[width=1\linewidth]{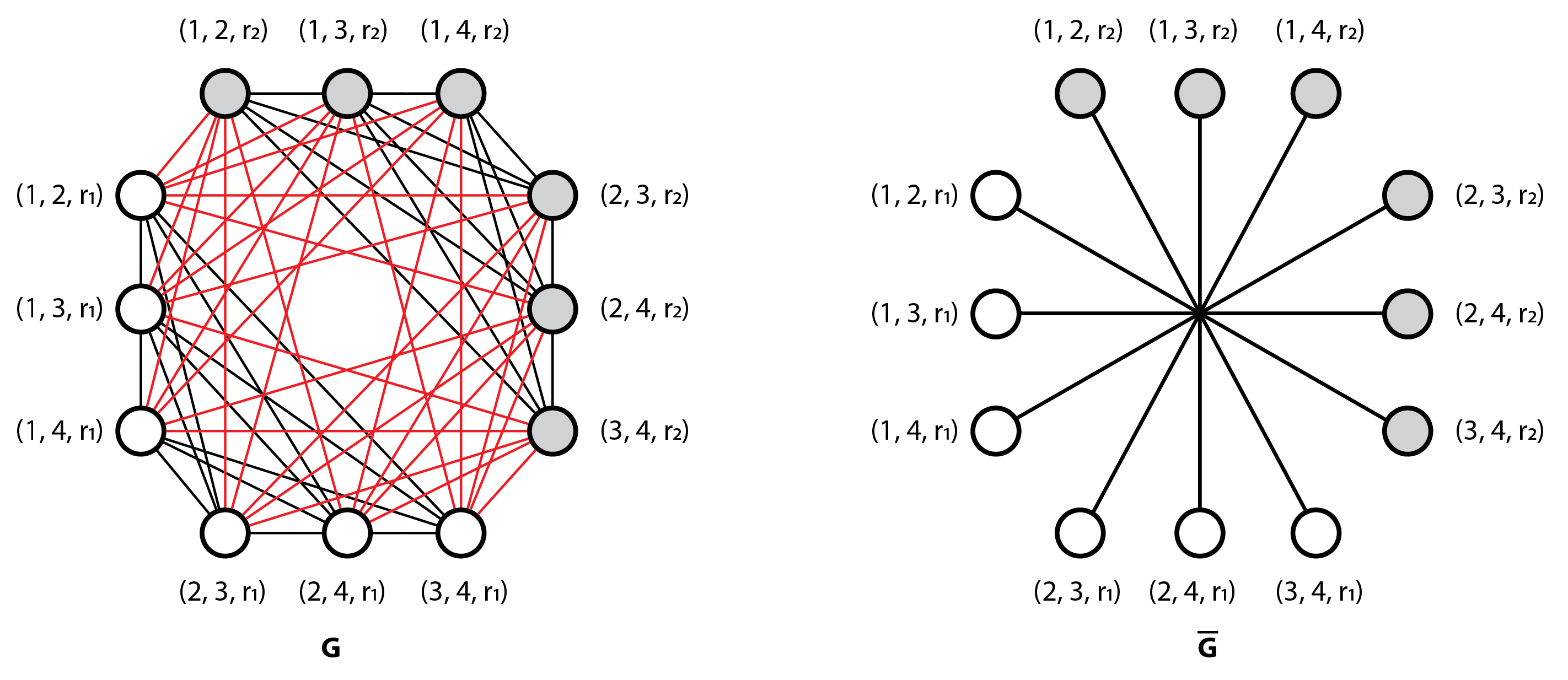}
    \caption{Illustration of the graph $G$ and its complement $\bar{G}$ for $m = 4$, $n = 2$.}
\end{figure}

Consider the following valuations for the players. 
\begin{table}[H]
    \footnotesize
    \begin{tabular}[t]{|c| c | c|} 
     \multicolumn{3}{l}{Person 1 ($V_1$): } \\
     \hline
     Roommate & $r_1$ & $r_2$ \\ 
     \hline\hline
     1 & 10 & 5 \\ 
     \hline
     2 & 8 & 8 \\
     \hline
     3 & 2 & 2 \\
     \hline
     4 & 1 & 1 \\
     \hline
    \end{tabular}
    \hfill
    \begin{tabular}[t]{|c| c | c|} 
     \multicolumn{3}{l}{Person 2 ($V_2$): } \\
     \hline
     Roommate & $r_1$ & $r_2$ \\ 
     \hline\hline
     1 & 6 & 6 \\ 
     \hline
     2 & 3 & 8 \\
     \hline
     3 & 7 & 5 \\
     \hline
     4 & 7 & 6 \\
     \hline
    \end{tabular}
    \hfill
    \begin{tabular}[t]{|c| c | c|} 
     \multicolumn{3}{l}{Person 3 ($V_3$): } \\
     \hline
     Roommate & $r_1$ & $r_2$ \\ 
     \hline\hline
     1 & 4 & 2 \\ 
     \hline
     2 & 8 & 1 \\
     \hline
     3 & 6 & 6 \\
     \hline
     4 & 6 & 5 \\
     \hline
    \end{tabular} 
    \hfill
    \begin{tabular}[t]{|c| c | c|} 
     \multicolumn{3}{l}{Person 4 ($V_4$): } \\
     \hline
     Roommate & $r_1$ & $r_2$ \\ 
     \hline\hline
     1 & 1 & 1 \\ 
     \hline
     2 & 7 & 6 \\
     \hline
     3 & 4 & 3 \\
     \hline
     4 & 8 & 9 \\
     \hline
    \end{tabular}
\end{table}       

We can construct the vertex weights like so, with the weight of each vertex equal to the valuation of the rooming group $\{i, j\}$. The maximum weighted independent set solution is highlighted, with a maximum social welfare of $14 + 10 = 24$.
\begin{figure}[H]
    \centering
    \includegraphics[width=0.55\linewidth]{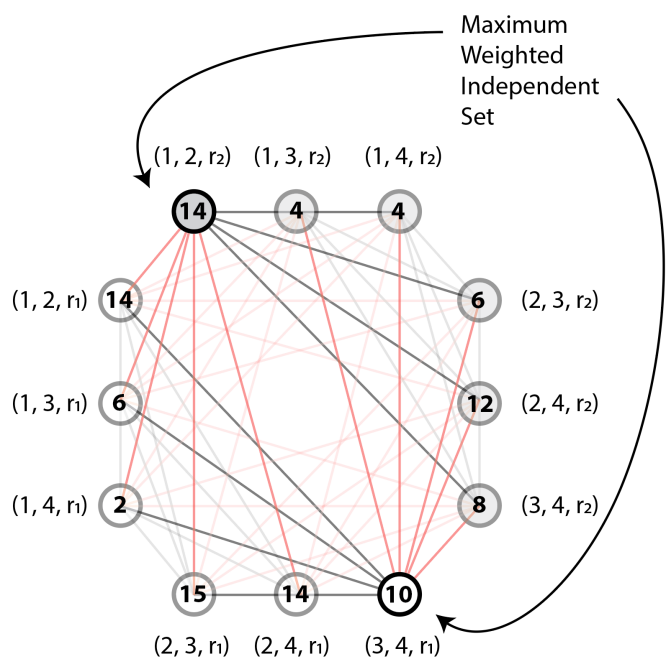}
    \caption{Maximum Weighted Independent Set with example valuation}
    \label{fig:enter-label}
\end{figure}

Compare with the greedy solution. The greedy algorithm would choose $(2, 3, r_1)$ on the first round since it has the highest valuation, and the only choice left in the second round is $(1, 4, r_2)$, which gives a social welfare of $15 + 4 = 19$, smaller than the maximum social welfare. Bipartite matching will also not be able to find the maximum welfare solution since the other choice $(2, 3, r_2)$ and $(1, 4, r_1)$ yields a social welfare of $6 + 2 = 8$.
                  
\subsection{Discussion of Conjecture 1} \label{appendix: conjecture}

In graph theoretic terms, the most accurate description of the MWIS graph $G$ is that it's a line graph of a 3-uniform hypergraph, as shown in Figure \ref{fig:hypergraph}. The hypergraph connects three sets of vertices, with each hypergraph edge intersecting 3 vertices: 1 in the set of roommate 1, 1 in the set of roommate 2, and 1 in the set of rooms.
\begin{figure}[H]
    \centering
    \includegraphics[width=0.95\linewidth]{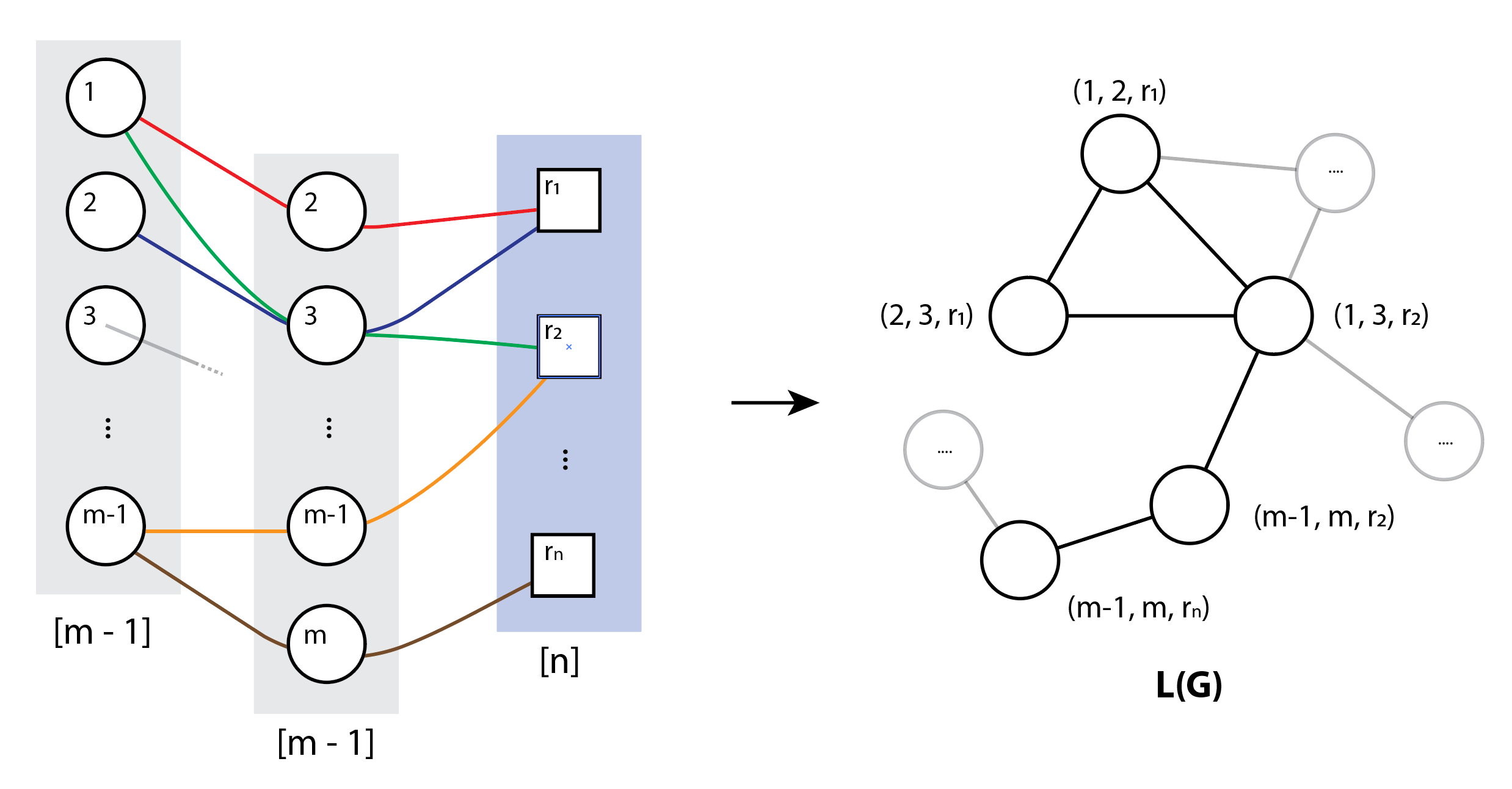}
    \caption{3-uniform hypergraph representation of the problem and its line graph}
    \label{fig:hypergraph}
\end{figure}

The line graph of a graph is the graph generated by turning every edge into a vertex and connecting two vertices if their edges in the original graph share a vertex. Therefore, we can think of each $(i, j, r)$ tuple as an edge in the original hypergraph, and two vertices in the line graph are connected if they have at least 1 attribute in common. 

Line graphs of regular graphs are perfect graphs, where MWIS can be solved in polynomial time \citep{combopt}. Additionally, new research in graph theory on the hardness of solving MWIS on graphs without $H$ as a subgraph have gotten closer to our graph in question. Dallard et al have shown that MWIS can be solved in polynomial time for $\{K_{1,d}, S, T\}$-free graphs, which are very close to our graph $G$. These novel results in graph theory and optimization gradually overturns previous NP-hardness results for special types of graphs, and thus it's very possible that MWIS can be solved for this graph in polynomial time, which would overturn the NP-hardness result in \citep{chan_assignment_2016}.






\end{document}